\newtheorem{theorem}{Theorem}
\newtheorem{lemma}[theorem]{Lemma}
\newcommand\scalemath[2]{\scalebox{#1}{\mbox{\ensuremath{\displaystyle #2}}}}
\renewcommand*\env@matrix[1][\arraystretch]{%
  \edef\arraystretch{#1}%
  \hskip -\arraycolsep
  \let\@ifnextchar\new@ifnextchar
  \array{*\c@MaxMatrixCols c}}
\title{Optimized Signaling of Binary Correlated Sources over GMACs}
\renewcommand{\arraystretch}{1.5}
\DeclareMathOperator*{\argmax}{arg\, max}
\DeclareMathOperator*{\argmin}{arg\, min}
\author{Jian-Jia~Weng,~Fady~Alajaji, and~Tam{\'a}s~Linder%
\thanks{The authors are with the Department of Mathematics and Statistics, Queen’s University, Kingston, ON K7L 3N6, Canada (email: jian-jia.weng@queensu.ca, fady@mast.queensu.ca, linder@mast.queensu.ca).}
\thanks{This work was supported in part by NSERC of Canada.}}
\begin{document}

\maketitle
\begin{abstract}
This work focuses on the construction of optimized binary signaling schemes for two-sender uncoded transmission of correlated sources over non-orthogonal Gaussian multiple access channels. 
Specifically, signal constellations with binary pulse-amplitude-modulation are designed for two senders to optimize the overall system performance. 
Although the two senders transmit their own messages independently, it is observed that the correlation between message sources can be exploited to mitigate the interference present in the non-orthogonal multiple access channel. 
Based on a performance analysis under joint maximum-a-posteriori decoding, optimized constellations for various basic waveform correlations between the senders are derived. 
Numerical results further confirm the effectiveness of the proposed design. 
\end{abstract}


\section{Introduction}
The requirement of transmitting correlated information appears in many practical scenarios. For example, nearby measurement stations regularly report observed temperatures to a control center to track environmental change. For transmitting correlated sources over Gaussian multiple access channels (GMACs), the pioneering study in \cite{CT80} proposed a random coding scheme to establish reliable communication from the channel capacity perspective. 
Using powerful channel codes, e.g., low-density parity-check codes \cite{Gallager63} or turbo codes \cite{Turbo}, practical code constructions with capacity approaching performance were also given for various GMACs \cite{AA09}-\cite{AA14}. Unfortunately, most of these codes incur relatively high computational complexity and long decoding delay. 
An alternative approach to channel coding is uncoded transmission in which each source symbol is directly mapped to one channel input signal. This simple scheme is particularly suitable for resource-limited systems such as wireless sensor networks \cite{Liu10}. However, in the absence of the protection provided by channel codes, recovering the transmitted data from the received noisy signal becomes challenging. 

In this paper, we study the optimization of uncoded transmission of correlated sources over GMACs. Our objective is to design binary signaling schemes for each sender such that the system joint error rate is minimized. The basic setup is briefly summarized as follows. 
The two senders are assumed to employ binary pulse-amplitude-modulation (BPAM) such that each sender has its own energy constraint. 
The GMAC we consider is a non-orthogonal channel. The two BPAM signals are transmitted in the same time slot and frequency band and hence multiple access interference will occur if the senders' basic pulse waveforms are not orthogonal. Furthermore, the two senders are assumed to transmit their own messages independently. Cooperative transmission is excluded in this paper because it is usually infeasible for resource-limited networks with separated transmitters. Lastly, a joint maximum-a-posteriori (MAP) decoder which can exploit the correlation between the source messages is used at the receiver. We note that a similar problem has been tackled recently in \cite{Tyson15}, in which an orthogonal GMAC was considered. Our work can be viewed as a substantial generalization of \cite{Tyson15}. 

Under the above setting, it can readily be seen that using an identical BPAM signaling scheme at both senders is inadequate because it results
in a combined constellation for the transmitted pair of messages in which the constellation of one user is exactly superposed to the constellation
of the other user. In this case, the receiver cannot decode the received signal without any error, even when the transmission is noise-free. To resolve this ambiguity, \cite{JH11} and \cite{JH13} respectively propose a rotation scheme and an energy allocation scheme. 
While these schemes aim to enlarge the constrained constellation capacity for the transmission of independent and uniformly distributed sources over GMACs, the proposed ideas may also improve the error rate performance for correlated and non-uniform sources. 
However, as reported in \cite{Korn03}-\cite{LW12}, symmetric constellations are often not optimal for non-uniformly distributed sources. Using the modulated signals obtained by either the rotation or energy allocation scheme is then likely to be sub-optimal. Instead of significantly altering the conventional antipodal BPAM constellation, we propose to directly design signals. 
In this approach, we explicitly construct constellations by analytically optimizing the system's exact error rate or its upper bound for high signal-to-noise ratios (SNRs). More importantly, the correlation between sources is not only exploited to boost the decoding performance, but it is also used to mitigate the interference between the two independently transmitted signals. 

We next briefly review some prior work related to the subject of this paper.  
In \cite{Cheng89}, the authors characterize the capacity region of the two-sender GMAC using PAM signals based on the notions of root-mean-square and factional out-of-band energy. 
Achievable rates for the two-sender GMAC with uncoded PAM signals are derived in \cite{DT09}. 
Prior work on designing non-binary constellations for non-uniformly distributed sources sent over point-to-point channels, e.g., \cite{Nguen05}-\cite{LW12}, \cite{Moore10}, can be helpful when higher order modulation schemes are considered for the GMAC. 
When the number of senders or the modulation order increases, successive interference cancellation decoding \cite{Verdu98} can be employed to reduce the computational complexity of the receiver. 
Note that although our transmission model is simple, the results obtained in this paper can be potentially applied to wireless ad-hoc networks \cite{Tonguz06}, cognitive radios \cite{Tarokh05}, and non-orthogonal multiple access in the fifth generation (5G) mobile systems \cite{Dai15}.  

The rest of this paper is organized as follows. In Section~II, we describe the system with a two-sender GMAC and analyze its error rate performance under joint MAP decoding. In Section~III, the design procedure for correlated pulse waveforms is presented, and explicit optimized constellations are derived. In Section~IV, the performance of the proposed signaling schemes is systematically assessed via simulation. Conclusions and future works are drawn in Section~V.

\section{System Description and Error Rate Analysis}
\subsection{GMAC Transmission System}
The transmission system we study is depicted in Fig.\ \ref{BlockDig}. 
The system comprises two senders and one receiver. 
In each time slot, the senders simultaneously transmit their binary source messages over a multiple access channel with additive white Gaussian (AWGN) noise. 
The source messages are assumed to be correlated, and hence a joint MAP decoder is employed at the receiver to minimize the joint symbol error rate. 
The binary messages of sender $1$ and sender $2$ are denoted by $U$ and $V$, respectively, and they have the joint probability distribution $p_{uv} \triangleq \Pr\left(U=u, V=v\right)$, for $u, v\in\{0,1\}$. Let $p_1=\Pr(U=0)$ and $p_2=\Pr(V=0)$. 
The joint distribution can be also described in terms of $p_1$, $p_2$, and the sources' correlation coefficient as follows 
\begin{eqnarray}
p_{00} &=& 1 - (1 - p_1) - (1 - p_2) + p_{11}\nonumber\\
p_{10} &=& (1 - p_1) - p_{11}\nonumber\\
p_{01} &=& (1 - p_2) - p_{11},\nonumber
\end{eqnarray}
and 
\[
p_{11} = \gamma_{\text{m}}\sqrt{p_1(1 - p_1)p_2(1 - p_2)} + (1 - p_1)(1 - p_2). 
\]
where $\gamma_{\text{m}}=\text{Cov}(U, V)/(\sigma_{U}\sigma_{V})$ denotes the correlation coefficient between $U$ and $V$, $\text{Cov}(\cdot, \cdot)$ is the covariance, and $\sigma_U$ and $\sigma_V$ are the standard deviations of $U$ and $V$, respectively. 
To avoid uninteresting cases, we will assume $p_{uv}>0$ for all $u$ and $v$. 

\begin{figure}[!tb]
\centering
\includegraphics[draft=false, scale=0.5]{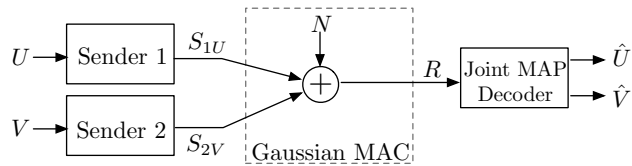} 
\caption{The block diagram of the GMAC transmission system.}
\label{BlockDig}
\end{figure}

To transmit data over the GMAC system, sender $j$ uses BPAM with waveform signal $a_{jb}\phi_j(t)$, where $a_{jb}$ denotes the amplitude for modulating binary source message $b\in\{0, 1\}$ and $\phi_j(t)$ is the sender's basic BPAM pulse waveform. 
The $\phi_j(t)$'s are assumed to be of finite duration $T$ and unit energy, i.e., $\int_{0}^{T}\phi_j^2(t)\dif t=1$. 
The correlation between $\phi_1(t)$ and $\phi_2(t)$ is denoted by $\gamma_{\phi}$, where $\gamma_{\phi}=\int_{0}^{T}\phi_1(t)\phi_2(t)\dif t$ and $-1\le\gamma_{\phi}\le 1$. 
Instead of using the continuous time description of signals, one can describe the waveform signals in a two-dimensional signal space by choosing $\phi_1(t)$ and $(\phi_2(t)-\gamma_{\phi}\phi_1(t))/(\int_{0}^{T}\phi_2(t)-\gamma_{\phi}\phi_1(t)\dif t)$ to form an orthonormal basis. For simplicity, the two basis vectors are identified with the real and imaginary axes on the complex plane. The waveform signal $a_{jb}\phi_j(t)$ can be now equivalently described by a point $S_{jb}$ on the complex plane obtained by projecting the waveform signal onto the signal space. 

By this choice of basis, the signal points $S_{10}$ and $S_{11}$ are located on the real axis of the complex plane with values $S_{10}=a_{10}$ and $S_{11}=a_{11}$, while the points $S_{20}$ and $S_{21}$ are generally complex-valued. 
There are two cases where both $S_{20}$ and $S_{21}$ are either purely real-valued or imaginary-valued. When $\gamma_{\phi}=0$, $S_{20}$ and $S_{21}$ lie on the imaginary axis of the complex plane with values $S_{20}=ia_{20}$ and $S_{21}=ia_{21}$, where $i$ denotes the imaginary unit. 
In this case, the basic pulse waveforms are orthogonal and thus no interference from the other sender will be introduced during transmission, and the transmitted signals are only perturbed by the channel noise. 
Another special case is when $\gamma_{\phi}=\pm 1$, i.e., $\phi_2(t)=\pm \phi_1(t)$. In this case, $S_{20}$ and $S_{21}$ are on the real axis. 
Here, strong multiple access interference occurs due to the high correlation between the transmitted signals of the two senders. 
A simple example is when both senders employ the same BPAM scheme, thereby producing $S_{10}=S_{20}$ and $S_{11}=S_{21}$.
Later, we will see that even under strong multiple access interference, it is possible to design appropriate BPAM signals to mitigate the interference and improve the quality of transmission.
 
Let $\mathcal{S}_{j}$ represent the BPAM signal constellation of sender~$j$ so that $\mathcal{S}_1=\{S_{10}, S_{11}\}$ and  $\mathcal{S}_2=\{S_{20}, S_{21}\}$. 
We additionally impose an average energy constraint on the constellation given by 
\begin{equation}
p_j|S_{j0}|^2+(1-p_j)|S_{j1}|^2= E_j,\ j=1, 2,
\label{PowerConst}
\end{equation}
where $|\cdot|$ denotes absolute value (magnitude) and $E_j$ is the average energy for transmitting sender $j$'s input source message. 
When the source messages $(U, V)$ are sent over the GMAC, the received signal at the output of the matched filter can be written as
\begin{equation}
R = S_{1U} + S_{2V} + N, 
\label{R}
\end{equation}
where $N$ denotes complex-valued zero-mean Gaussian noise with variance $\sigma^2$ per dimension, having independent components that are also independent of $(U, V)$. 
Letting $A_{UV}=S_{1U}+S_{2V}$, (\ref{R}) can be written as $R=A_{UV}+N$. 
Let $\mathcal{A}=\{S_{1u}+S_{2v}: u, v\in\{0,1\}\}$ denote the constellation that contains all combined signal points $A_{UV}$. 
Examples of the constellations $\mathcal{S}_1$, $\mathcal{S}_2$, and $\mathcal{A}$ are shown in Fig.\ \ref{ConstellationEx}. 
Note that to avoid harmful interference, it is necessary to use constellations such that the mapping from $\mathcal{S}_1\times\mathcal{S}_2$ to $\mathcal{A}$ given by $A_{UV}=S_{1U}+S_{2V}$ is bijective.

\begin{figure*}[!t]
\centering
\subfloat[$\gamma_{\phi}=\pm 1$]{\includegraphics[draft=false, scale=0.43]{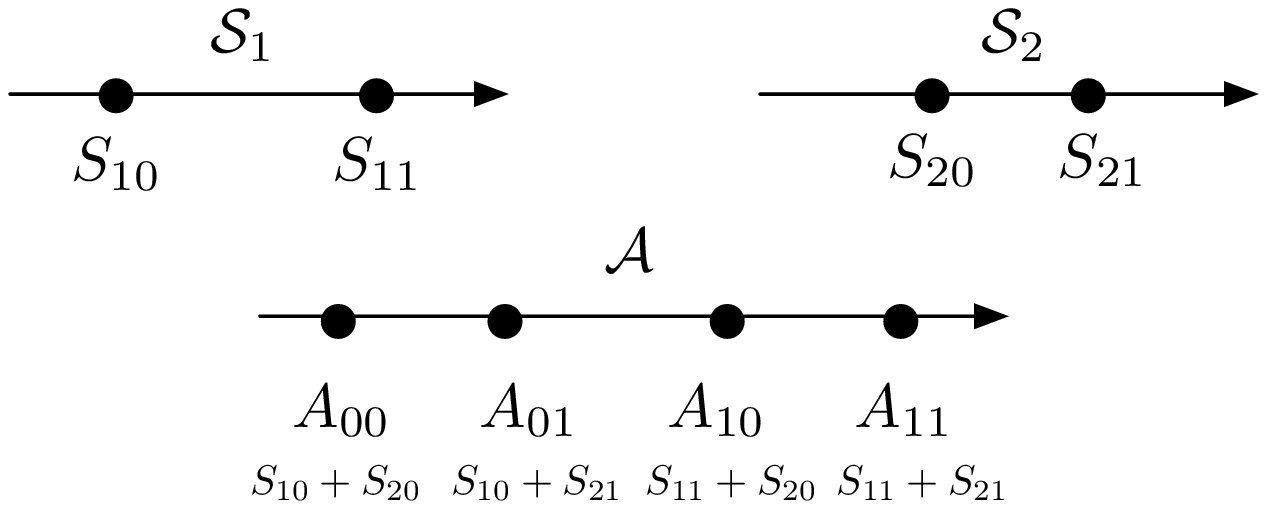}
\label{1DFig}}
\qquad
\subfloat[$\gamma_{\phi}\neq\pm 1$]{\includegraphics[draft=false, scale=0.43]{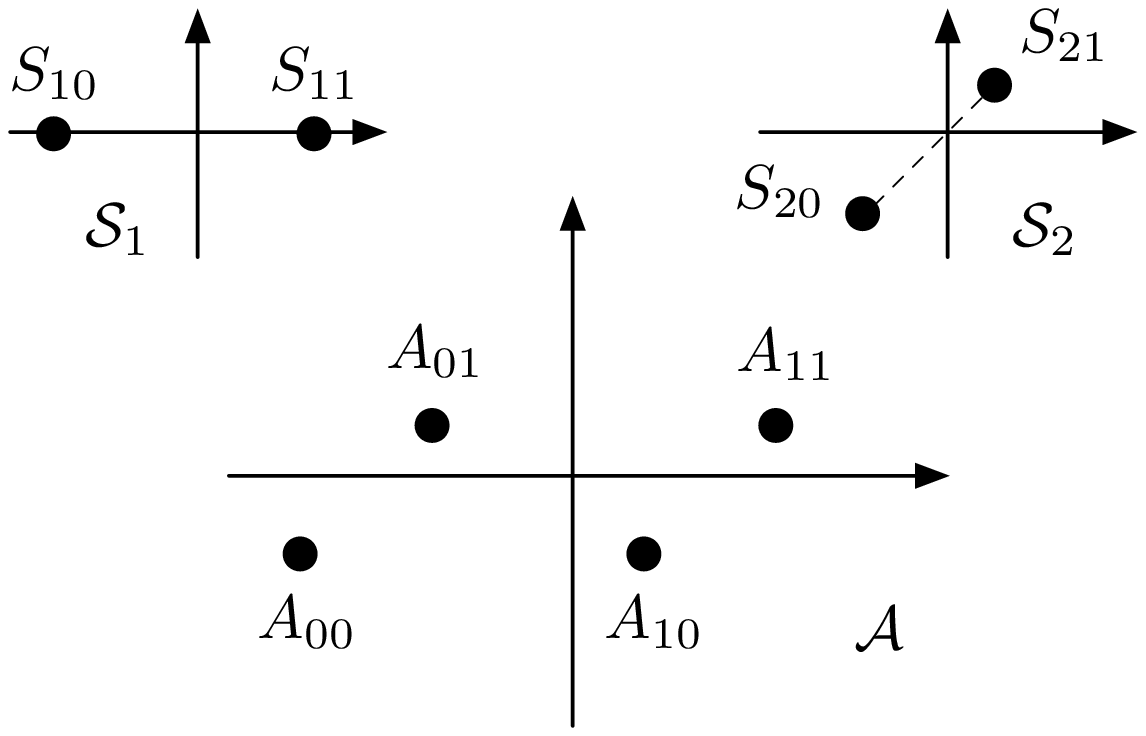}
\label{2DFig}}
\caption{An illustration of signal constellations of the two senders and the combined constellation.}
\label{ConstellationEx}
\end{figure*}

Suppose that $(U, V)=(u, v)$ is sent. 
When the receiver observes signal $R=r$, the (optimal) joint MAP decoder generates an estimate of the transmitted source messages based on the joint source distribution $p_{UV}$ and the observation $r$ via the decision rule 
\begin{IEEEeqnarray}{lCl}
(\hat{u}, \hat{v}) &=& \argmax_{(l,m)\in\{0,1\}^2}\Pr(U=l, V=m|R=r)\nonumber\\
&=&\argmax_{(l,m)\in\{0,1\}^2}p_{lm}\cdot\frac{1}{2\pi\sigma^2}\exp\left(\frac{-|r - A_{lm})|^2}{2\sigma^2}\right)\nonumber\\
&=&\argmax_{(l,m)\in\{0,1\}^2} \ln p_{lm} + \frac{2\Re[rA_{lm}^{*}]-|A_{lm}|^2}{2\sigma^2},\nonumber
\label{MAPRule}
\end{IEEEeqnarray}
where $\Re[z]$ and $z^{*}$ denote the real part and the conjugate of the complex number $z$, respectively. 
For convenience, we define the random variable $H_{lm}\triangleq\ln p_{lm} + (2\Re[RA_{lm}^{*}]-|A_{lm}|^2)/2\sigma^2$. 
Given $R=r$, the realization of $H_{lm}$, denoted by $h_{lm}$, can be viewed as a decision score for $A_{lm}$. The joint MAP decoder simply outputs the pair of source messages with the highest score so as to minimize the probability of erroneous detection. 

\subsection{The Error Rate Performance of the Joint MAP Decoder}
The conditional probability $P_{\text{c}, uv}$ of correct decoding given that $(U, V)=(u, v)$ is sent over the channel is given by
\begin{IEEEeqnarray*}{l}
P_{\text{c}, uv}=\Pr(\bigcap_{(l,m)\in\{0, 1\}^2: (l, m)\neq(u,v)}\{H_{uv}-H_{lm}> 0\}).
\label{Puv}
\end{IEEEeqnarray*}
To simplify the notation, we define the scaled difference metric  
\begin{IEEEeqnarray}{lCl}
\Delta_{uv, lm}&\triangleq&-\sigma^2\cdot(H_{uv}-H_{lm})\nonumber\\
&=&\Re[R(A_{lm}-A_{uv})^{*}]-\frac{|A_{lm}|^2-|A_{uv}|^2}{2} - \sigma^2\ln\frac{p_{uv}}{p_{lm}}.\label{SDM}
\end{IEEEeqnarray}
By substituting $R=A_{uv}+N$ in (\ref{SDM}), we obtain
\begin{IEEEeqnarray}{lCl}
\Delta_{uv, lm}&=&\Re[N(A_{lm}-A_{uv})^{*}] - \frac{|A_{lm}-A_{uv}|^2}{2} - \sigma^2\ln\frac{p_{uv}}{p_{lm}}. \label{Duvlm}
\end{IEEEeqnarray}
Note that $\Delta_{uv, lm}$ is a Gaussian random variable whose mean and variance are respectively given by
\begin{IEEEeqnarray}{c}
\mu_{uv, lm} \triangleq - \frac{|A_{lm}-A_{uv}|^2}{2} - \sigma^2\ln\frac{p_{uv}}{p_{lm}}
\label{Duvlmmu}
\end{IEEEeqnarray}
and
\begin{IEEEeqnarray}{c}
\sigma^2_{uv, lm}\triangleq \sigma^2\cdot|A_{lm}-A_{uv}|^2.
\label{Duvlmsigma}
\end{IEEEeqnarray} 
The system's error rate $P_{\text{err}}$ under joint MAP decoding can be written as
\begin{IEEEeqnarray}{lCl}
P_{\text{err}}&=&\sum_{(u, v)}p_{uv}\Pr\left((u,v)\neq\argmin_{(l,m)}\Delta_{uv, lm}\right)\label{PEP1}\nonumber\\
&=&1 - \sum_{(u, v)}p_{uv}\Pr\left((u,v)=\argmin_{(l,m)}\Delta_{uv, lm}\right)\nonumber\\
&=&1 - \sum_{(u, v)}p_{uv}\Pr\left(\Delta_{uv, lm}<0\ \text{for all}\ (l, m)\neq (u, v)\right)\nonumber\\
&=&1 - \sum_{(u, v)}p_{uv}P_{\text{c}, uv}. 
\label{MAPPb0}
\end{IEEEeqnarray}

In general, the probabilities $P_{\text{c}, uv}$ in (\ref{MAPPb0}) cannot be easily determined because the decision regions of the combined signal points on the complex plane are often of irregular shape, requiring complicated numerical integration. 
Nevertheless, when $\gamma_{\phi}=\pm 1$, the decision regions become intervals on the real line. For this case, we can directly identify the decision regions and calculate the probabilities $P_{\text{c}, uv}$. 
Specifically, for $\gamma_{\phi}=\pm 1$, the inequality $\Delta_{uv, lm}< 0$ can be expressed as
\begin{IEEEeqnarray}{c}
D_{uv, lm}\cdot \text{Re}[N]<  \frac{D_{uv, lm}^2}{2} + \sigma^2\ln\frac{p_{uv}}{p_{lm}}, 
\label{deltaeq} 
\end{IEEEeqnarray}
where $D_{uv, lm}\triangleq A_{uv}-A_{lm}=(S_{1u}+S_{2v})-(S_{1l}+S_{2m})$. 
Given $D_{10, 00}$, $D_{01, 00}$, and $|D_{10, 00}|-|D_{01, 00}|$, (\ref{deltaeq}) specifies a range of values in the form of an interval that $\text{Re}[N]$ can take. 
For each pair $(u, v)$, the decision region is specified by three such inequalities. 
Letting $\Omega$ be the intersection of the corresponding intervals, we have
\begin{IEEEeqnarray}{c}
P_{\text{c}, uv}=\int_{\Omega} \left(\frac{1}{\sqrt{2\pi\sigma^2}}\right)\exp\left(\frac{-t^2}{2\sigma^2}\right)\dif t
\label{Pb1D}
\end{IEEEeqnarray}
and the overall error probability is immediately obtained via (\ref{MAPPb0}). 
A detailed example for this procedure is given in the Appendix.

For $\gamma_{\phi}\neq\pm 1$, we can combine the techniques developed in \cite{Tyson15} and \cite{Leszek06} to find the error probability. 
First, based on (\ref{Duvlm}), it can be verified that
\begin{IEEEeqnarray}{c}
\Delta_{uv, \bar{u}\bar{v}}=\Delta_{uv, \bar{u}v} + \Delta_{uv, u\bar{v}} + \alpha_{uv}
\label{DuvlmIden}
\end{IEEEeqnarray}
with $\bar{e}=0$ if $e=1$ and $\bar{e}=1$ if $e=0$, and
\begin{equation*}
\alpha_{uv} = \left\{
\begin{array}{ll}
\sigma^2\ln\frac{p_{00}p_{11}}{p_{01}p_{10}} - \zeta,\ \text{if}\ (u, v)\in\{(0, 0), (1, 1)\}\\
\sigma^2\ln\frac{p_{01}p_{10}}{p_{00}p_{11}} + \zeta,\ \text{if}\ (u, v)\in\{(0, 1), (1, 0)\},\\
\end{array}\right.
\label{Alpha}
\end{equation*}
where $\zeta\triangleq\Re[(A_{10}-A_{00})(A_{01}-A_{00})^{*}]$. 
Using (\ref{DuvlmIden}), we have 
\begin{IEEEeqnarray*}{l}
P_{\text{c}, uv}=\Pr\Big(\{\Delta_{uv, \bar{u}v}<0\} \cap \{\Delta_{uv, u\bar{v}}<0\}\cap \{\Delta_{uv, \bar{u}v}+\Delta_{uv, u\bar{v}}+\alpha_{uv}<0\}\Big).\nonumber
\label{2DProb}
\end{IEEEeqnarray*} 
Since each $\Delta_{uv, lm}$ is an affine function of $N$, $\Delta_{uv, \bar{u}v}$ and $\Delta_{uv, u\bar{v}}$ are jointly Gaussian with joint probability density function (pdf)
\begin{IEEEeqnarray}{l}
f_{\Delta_{uv, \bar{u}v}, \Delta_{uv, u\bar{v}}}(x, y)=\frac{1}{2\pi\sigma_{uv, \bar{u}v}\sigma_{uv, u\bar{v}}\sqrt{1-\gamma^2}}\exp\left(\frac{-1}{2(1-\gamma^2)}\left[\left(\frac{x-\mu_{uv, \bar{u}v}}{\sigma_{uv, \bar{u}v}}\right)^2\right.\right.\nonumber\\
\left.\left.\qquad\qquad\qquad\qquad\qquad\qquad\qquad-2\gamma\frac{(x-\mu_{uv, \bar{u}v})(y-\mu_{uv, u\bar{v}})}{\sigma_{uv, \bar{u}v}\sigma_{uv, u\bar{v}}}\right.\right.
\left.\left.+\left(\frac{y-\mu_{uv, u\bar{v}}}{\sigma_{uv, u\bar{v}}}\right)^2\right]\right),\nonumber
\end{IEEEeqnarray}
where $\mu_{uv, lm}$ and $\sigma_{uv, lm}$ are respectively given in (\ref{Duvlmmu}) and (\ref{Duvlmsigma}), and $\gamma$ denotes the correlation coefficient between $\Delta_{uv, \bar{u}v}$ and $\Delta_{uv, u\bar{v}}$ given by
\begin{IEEEeqnarray}{l}
\gamma \triangleq\frac{\text{Cov}(\Delta_{uv, \bar{u}v}, \Delta_{uv, u\bar{v}})}{\sigma_{uv, \bar{u}v}\sigma_{uv, u\bar{v}}}=\frac{\zeta}{\sigma^2|A_{\bar{u}v}-A_{uv}||A_{u\bar{v}}-A_{uv}|}.\nonumber\\*
\label{gamma}
\end{IEEEeqnarray}
Let $\Lambda_1\triangleq (\Delta_{uv, \bar{u}v}-\mu_{uv, \bar{u}v})/\sigma_{uv, \bar{u}v}$ and $\Lambda_2\triangleq (\Delta_{uv, u\bar{v}}-\mu_{uv, u\bar{v}})/\sigma_{uv, u\bar{v}}$. Then, 
\begin{IEEEeqnarray}{rCl}
P_{\text{c}, uv}&=&\int_{-\infty}^{\frac{-\mu_{uv, u\bar{v}}}{\sigma_{uv, u\bar{v}}}}\int_{-\infty}^{\frac{-\mu_{uv, \bar{u}v}}{\sigma_{uv, \bar{u}v}}} \frac{1}{2\pi\sqrt{1-\gamma^2}}\exp\left(\frac{-1}{2(1-\gamma^2)}\left[\lambda_1^2-2\gamma\lambda_1\lambda_2+\lambda_2^2\right]\right)\dif \lambda_1\dif \lambda_2-\beta
\label{2DER}
\end{IEEEeqnarray}
with $\beta=0$ for $\alpha_{uv}\le 0$ and
\[
\beta = \int_{-\alpha_{uv}}^{0}\int_{-y-\alpha_{uv}}^{0}f_{\Delta_{uv, \bar{u}v}, \Delta_{uv, u\bar{v}}}(x, y)\dif x\dif y
\] 
for $\alpha_{uv}>0$. 
Although we still do not have a closed form expression for the $P_{\text{c}, uv}$'s, their values are now easily computable. With the values of $P_{\text{c}, uv}$, the decoding error probability can be found via (\ref{MAPPb0}). 

A special case where the error probability has a closed form expression is when $\gamma_{\phi}=0$ and the source messages are uniformly distributed, i.e., $p_1=p_2=1/2$. In this case, we have 
\begin{equation}
P_{\text{err}} = 1 - \left(1-Q\left(\frac{|S_{11}-S_{10}|}{2\sigma}\right)\right)\left(1-Q\left(\frac{|S_{21}-S_{20}|}{2\sigma}\right)\right)\rule[-1.8em]{0pt}{0pt}
\label{UniPe} 
\end{equation} 
where $Q(x) \triangleq \int_{x}^{\infty}(1/\sqrt{2\pi})\exp(t^2/2)\dif t$ is the Gaussian Q-function. 
This expression is in fact the symbol error rate of the rectangular four quadrature-amplitude-modulation (4-QAM) in AWGN channels for uniformly distributed source messages. 
This result is expected because the transmission of two orthogonal signals $S_{1U}$ and $S_{2V}$ over a GMAC under joint MAP decoding can be viewed as a one-sender $4$-QAM transmission over an AWGN channel. 
When the two bits of a $4$-QAM symbol are independent of each other, the decoding of $4$-QAM can be decomposed into two independent detections, one for each bit. 
The decoding is correct only if the detection of both bits are correct, thereby yielding the expression in (\ref{UniPe}). 

\section{Optimized Design of Binary Constellations for Two-Sender GMAC}
For given sender waveforms $\phi_1(t)$ and $\phi_2(t)$, our objective is to find the coefficients $\{a_{jb}: j=1, 2, b=0, 1\}$ that minimize $P_{\text{err}}$. 
From a signal space viewpoint, this is equivalent to designing two BPAM constellations on the complex plane. In what follows, for the sake of completeness, the designs for all possible values of $\gamma_{\phi}$ are considered. 
Although for the case of $\gamma_{\phi}=0$, the optimized constellations were already derived in \cite{Tyson15}, here we present a simpler way to arrive at the same conclusion. 
For $\gamma_{\phi}=\pm 1$, we derive the optimal constellations based on minimizing the error rate under joint MAP decoding in the high SNR regime. 
For other values of $\gamma_{\phi}$, since the expression of the exact error rate is generally too complex, the union bound on the error rate in the high SNR regime is minimized. 
Specifically, let $P_{\text{err}}(\mathcal{S}_1, \mathcal{S}_2, \sigma^2)$ denote the system's error rate when the constellations $\mathcal{S}_1$ and $\mathcal{S}_2$ are employed and the noise variance is $\sigma^2$. 
For $\gamma_{\phi}\in\{0, 1, -1\}$, we determine constellations $\mathcal{S}_{1}$ and $\mathcal{S}_{2}$ such that 
\begin{IEEEeqnarray}{c}
\lim_{\sigma^2\rightarrow 0}\frac{P_{\text{err}}(\mathcal{S}_1, \mathcal{S}_2, \sigma^2)}{P_{\text{err}}(\mathcal{\tilde{S}}_1, \mathcal{\tilde{S}}_2, \sigma^2)}\le 1
\label{opt0}
\end{IEEEeqnarray}
for any other constellations $\mathcal{\tilde{S}}_1$ and $\mathcal{\tilde{S}}_2$. 
For other values of $\gamma_{\phi}$, the constellations $\mathcal{S}_{1}$ and $\mathcal{S}_{2}$ are optimized in the sense that  
\begin{IEEEeqnarray}{c}
\lim_{\sigma^2\rightarrow 0}\frac{P^{(\text{UB})}_{\text{err}}(\mathcal{S}_1, \mathcal{S}_2, \sigma^2)}{P^{(\text{UB})}_{\text{err}}(\mathcal{\tilde{S}}_1, \mathcal{\tilde{S}}_2, \sigma^2)}\le 1
\label{opt2}
\end{IEEEeqnarray}
for any other constellations $\mathcal{\tilde{S}}_1$ and $\mathcal{\tilde{S}}_2$, where $P_{\text{err}}^{(\text{UB})}$ denotes the union bound on the error rate.

\begin{lemma}
For $j=1, 2$, the two signal points of $\mathcal{S}_j$ which are separated by the largest possible distance under the average energy constraint given in (\ref{PowerConst}) are in the form of $S_{j0}=-\sqrt{(1-p_j)E_j/p_j}e^{i\eta}$ and $S_{j1}=\sqrt{p_jE_j/(1-p_j)}e^{i\eta}$, where $\eta\in[0, 2\pi)$ and the largest separation distance is given by $d_{j, \max}\triangleq\sqrt{E_j/(p_j(1-p_j))}$. 
\label{MAXSepConstellation}
\end{lemma}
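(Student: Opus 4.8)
The plan is to decouple the magnitudes of the two constellation points from their phases, since the energy constraint in (\ref{PowerConst}) depends only on $|S_{j0}|$ and $|S_{j1}|$. Writing $r_0=|S_{j0}|$ and $r_1=|S_{j1}|$ (and noting $0<p_j<1$ by the assumption $p_{uv}>0$, so all divisions below are well defined), the triangle inequality gives $|S_{j0}-S_{j1}|\le r_0+r_1$, with equality precisely when $S_{j0}$ and $S_{j1}$ lie on a common line through the origin on opposite sides of it, i.e.\ $S_{j0}=-r_0e^{i\eta}$ and $S_{j1}=r_1e^{i\eta}$ for some common $\eta\in[0,2\pi)$. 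Thus the problem reduces to maximizing $r_0+r_1$ subject to $p_jr_0^2+(1-p_j)r_1^2=E_j$ over $r_0,r_1\ge 0$, while the optimal phase structure is already forced to be the antipodal configuration stated in the lemma.

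For the resulting radial subproblem, I would invoke the Cauchy--Schwarz inequality with the pairing $(\sqrt{p_j}\,r_0,\sqrt{1-p_j}\,r_1)$ and $(1/\sqrt{p_j},1/\sqrt{1-p_j})$, which gives
\begin{equation*}
(r_0+r_1)^2\le\big(p_jr_0^2+(1-p_j)r_1^2\big)\Big(\frac{1}{p_j}+\frac{1}{1-p_j}\Big)=\frac{E_j}{p_j(1-p_j)}.
\end{equation*}
Taking square roots yields $r_0+r_1\le\sqrt{E_j/(p_j(1-p_j))}=d_{j,\max}$, which establishes the claimed maximal separation distance.

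It then remains to identify the maximizer. Equality in Cauchy--Schwarz forces the two vectors to be proportional, i.e.\ $p_jr_0=(1-p_j)r_1$. Substituting this relation back into the energy constraint and solving the resulting one-variable equation gives $r_0=\sqrt{(1-p_j)E_j/p_j}$ and $r_1=\sqrt{p_jE_j/(1-p_j)}$, which are exactly the magnitudes of the points prescribed in the lemma. Combining these magnitudes with the antipodal phase condition obtained in the triangle-inequality step produces the stated forms $S_{j0}=-\sqrt{(1-p_j)E_j/p_j}\,e^{i\eta}$ and $S_{j1}=\sqrt{p_jE_j/(1-p_j)}\,e^{i\eta}$.

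The argument is essentially a clean two-stage optimization, so I do not anticipate a serious technical obstacle; the only point that warrants care is verifying that the two equality conditions can be met simultaneously, namely the phase condition coming from the triangle inequality and the proportionality condition coming from Cauchy--Schwarz. Since the former constrains only the arguments of $S_{j0}$ and $S_{j1}$ while the latter constrains only their magnitudes, the two conditions are independent and jointly attainable, so the upper bound $d_{j,\max}$ is genuinely achieved by the configuration asserted in the lemma.
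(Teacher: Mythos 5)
Your proof is correct, but it takes a genuinely different route from the paper. The paper lifts the problem into $\mathbb{R}^4$ by stacking the (weighted) real and imaginary parts of $S_{j0},S_{j1}$ into a vector $\bm{x}$, observes that $|S_{j0}-S_{j1}|^2=\bm{x}\bm{\Sigma}\bm{x}^T$ for an explicit $4\times 4$ positive semidefinite matrix $\bm{\Sigma}$ while the energy constraint becomes $\norm{\bm{x}}^2=1$, and then reads off the answer as the largest eigenvalue $E_j/(p_j(1-p_j))$ of $\bm{\Sigma}$ and its eigenvector, appending the rotation freedom $e^{i\eta}$ as a separate remark at the end. You instead decouple phase from magnitude: the triangle inequality forces the antipodal (collinear-through-the-origin) configuration and produces the phase parameter $\eta$ intrinsically, and the radial problem $\max(r_0+r_1)$ subject to $p_jr_0^2+(1-p_j)r_1^2=E_j$ is settled by Cauchy--Schwarz with the pairing $(\sqrt{p_j}r_0,\sqrt{1-p_j}r_1)\cdot(1/\sqrt{p_j},1/\sqrt{1-p_j})$, whose equality condition $p_jr_0=(1-p_j)r_1$ pins down the magnitudes. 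Your closing observation---that the two equality conditions constrain disjoint degrees of freedom (arguments versus moduli) and are therefore simultaneously attainable---is exactly the point needed to conclude the bound is achieved, and you handle it correctly. What each approach buys: yours is more elementary and self-contained (no eigendecomposition, no citation to matrix-analysis results, and the degenerate case $r_0=0$ or $r_1=0$ is still covered by the stated form), and it makes the origin of the phase freedom transparent; the paper's Rayleigh-quotient formulation is less hands-on but plugs directly into the prior literature it follows (the quadratic-programming view of the single-user problem) and generalizes mechanically to situations where the objective is a general quadratic form rather than one amenable to a clean Cauchy--Schwarz pairing.
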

\begin{proof}
Finding the signal points $S_{j0}$ and $S_{j1}$ which simultaneously achieve the largest separation distance and satisfy the average energy constraint is equivalent to solving the constrained quadratic minimization problem \cite{Ipatov07}: 
\[
\max \bm{x}\bm{\Sigma}\bm{x}^T\\
\text{subject to}\ ||\bm{x}||^2=1,
\]
where $(\cdot)^T$ and $||\cdot||$ respectively denote the transposition operation and the Euclidean norm, $\bm{x}=(\sqrt{p_j/E_j}\Re[S_{j0}], \allowbreak\sqrt{(1-p_j)/E_j}\Re[S_{j1}], \sqrt{p_j/E_j}\Im[S_{j0}], \allowbreak\sqrt{(1-p_j)/E_j}\Im[S_{j1}])$ is a row vector in which $\Im[z]$ denotes the imaginary part of the complex number $z$, and the $4\times 4$ positive semidefinite matrix $\bm{\Sigma}$ is given by
\begin{IEEEeqnarray}{c}
\Sigma=
\scalemath{0.9}{
\begin{pmatrix}[1]
\frac{E_j}{p_j} & -\frac{E_j}{\sqrt{p_j(1-p_j)}} & 0 & 0\\
-\frac{E_j}{\sqrt{p_j(1-p_j)}}  & \frac{E_j}{1-p_j} & 0 & 0\\
0&0&\frac{E_j}{p_j} & -\frac{E_j}{\sqrt{p_j(1-p_j)}}\\
0&0&-\frac{E_j}{\sqrt{p_j(1-p_j)}}  & \frac{E_j}{1-p_j}\\
\end{pmatrix}}.\nonumber
\end{IEEEeqnarray}
It is easy to verify that the $\bm{x}\bm{\Sigma}\bm{x}^T=|S_{j0}-S_{j1}|^2$, and the constraint $||\bm{x}||^2=1$ represents the average energy constraint (\ref{PowerConst}) for sender $j$. 
Using this formulation, the largest squared Euclidean distance and the corresponding signal points can be immediately obtained by determining the largest eigenvalue of $\bm{\Sigma}$ and the associated eigenvector \cite{Horn12}. 
These are respectively given by $E_j/(p_j(1-p_j))$ and $(-(1-p_j)E_j/p_j, p_jE_j/(1-p_j), 0, 0)$, yielding the signal points $S_{j0}=-\sqrt{(1-p_j)E_j/p_j}$ and $S_{j1}=\sqrt{p_jE_j/(1-p_j)}$ with the distance between them given by $\sqrt{E_j/(p_j(1-p_j))}\triangleq d_{j, \max}$. 
Moreover, since any constellation obtained by rotating $\mathcal{S}_j$ through an angle $\eta$ about the origin has the same separation distance and also satisfies the  energy constraint, the desired binary signals have the general form $S_{j0}=-\sqrt{(1-p_j)E_j/p_j}e^{i\eta}$ and $S_{j1}=\sqrt{p_jE_j/(1-p_j)}e^{i\eta}$, where $\eta\in [0, 2\pi)$. 
\end{proof}

We remark that the binary constellation given in Lemma~\ref{MAXSepConstellation} is in fact the optimal binary constellation for a single sender system with a non-uniformly distributed source \cite{Korn03}. 

We now give a few definitions.
When the constellations designed for two senders are constructed by only considering the marginal probabilities $p_1$ and $p_2$, the design is called an {\em individually optimized design}. 
If the constellation design is based on the joint probability distribution $p_{UV}$, we call it a {\em jointly optimized design}. 
Based on the optimality criteria presented in (\ref{opt0}) and (\ref{opt2}), we next derive the jointly optimized constellations for $\gamma_\phi=0$, $\gamma_\phi=\pm 1$, and other values of $\gamma_{\phi}$. 

\subsection{Design of Signal Constellations for $\gamma_{\phi}=0$}{\label{III-O}}
\begin{theorem}
The optimized constellations for the orthogonal transmission in the high SNR regime in the sense of (\ref{opt0}) are given by
\begin{equation}
S_{10}=-\sqrt{\frac{1-p_1}{p_1}E_1},\ S_{11}=\sqrt{\frac{p_1}{1-p_1}E_1}
\label{OptS1}
\end{equation}
and
\begin{equation}
S_{20}=-i\sqrt{\frac{1-p_2}{p_2}E_2},\ S_{21}=i\sqrt{\frac{p_2}{1-p_2}E_2}.
\end{equation}
\label{thm0}
\end{theorem}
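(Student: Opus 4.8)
The plan is to reduce the optimization of the exact high-SNR error rate to the purely geometric problem of maximizing the minimum distance of the combined constellation $\mathcal{A}$, and then to solve that problem by invoking Lemma~\ref{MAXSepConstellation} separately for each sender. First I would exploit the orthogonality $\gamma_\phi=0$: since $S_{10},S_{11}$ lie on the real axis and $S_{20},S_{21}$ on the imaginary axis, the four combined points $A_{uv}=S_{1u}+S_{2v}$ form a rectangle whose horizontal side is $d_1\triangleq|S_{11}-S_{10}|$ and whose vertical side is $d_2\triangleq|S_{21}-S_{20}|$. Consequently the six pairwise distances take only the three values $d_1$ (the two ``horizontal'' pairs that differ in $U$), $d_2$ (the two ``vertical'' pairs that differ in $V$), and $\sqrt{d_1^2+d_2^2}$ (the two diagonals), so the minimum distance of $\mathcal{A}$ is exactly $\min(d_1,d_2)$.

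Next I would make the link between this minimum distance and the high-SNR error rate precise. Using (\ref{Duvlm})--(\ref{Duvlmsigma}), each pairwise event $\{\Delta_{uv,lm}\ge 0\}$ has probability $Q\!\left(\dfrac{|A_{lm}-A_{uv}|}{2\sigma}+\dfrac{\sigma}{|A_{lm}-A_{uv}|}\ln\dfrac{p_{uv}}{p_{lm}}\right)$, in which the prior term is $O(\sigma)$ and therefore negligible against the leading $1/\sigma$ term as $\sigma^2\to 0$. Sandwiching $P_{\mathrm{err}}$ between its single largest pairwise term and the full union bound (both of order $Q\!\left(\min(d_1,d_2)/2\sigma\right)$) and using the tail estimate $Q(x)\sim e^{-x^2/2}/(x\sqrt{2\pi})$, I obtain that $P_{\mathrm{err}}$ decays like $e^{-\min(d_1,d_2)^2/8\sigma^2}$ up to polynomial-in-$\sigma$ and prior-dependent prefactors. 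Hence the ratio in (\ref{opt0}) is controlled by $\exp\{(\min(\tilde d_1,\tilde d_2)^2-\min(d_1,d_2)^2)/8\sigma^2\}$, and the criterion reduces to making $\min(d_1,d_2)$ as large as possible.

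Finally I would perform the optimization. The crucial structural fact is that the average-energy constraint (\ref{PowerConst}) decouples across senders: the constraint for $j=1$ involves only $S_{10},S_{11}$ and that for $j=2$ only $S_{20},S_{21}$. Therefore $d_1$ and $d_2$ can be maximized independently, and Lemma~\ref{MAXSepConstellation}---specialized to the real axis ($\eta=0$) for sender~$1$ and to the imaginary axis ($\eta=\pi/2$) for sender~$2$, as forced by $\gamma_\phi=0$---gives $d_{1,\max}=\sqrt{E_1/(p_1(1-p_1))}$ and $d_{2,\max}=\sqrt{E_2/(p_2(1-p_2))}$, attained exactly at the points displayed in the theorem. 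Since $\min(d_1,d_2)\le\min(d_{1,\max},d_{2,\max})$ for every feasible pair of constellations, with the proposed constellation attaining equality, the proposed design globally maximizes the minimum distance, so $\min(\tilde d_1,\tilde d_2)\le\min(d_1,d_2)$ and the exponent above is nonpositive, which establishes (\ref{opt0}).

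The step I expect to be the main obstacle is making the inequality ``$\le 1$'' fully rigorous in the \emph{tie} case, where a competing constellation $\tilde{\mathcal{S}}_1,\tilde{\mathcal{S}}_2$ also attains the maximal minimum distance. Then the dominant exponentials cancel and the limit is governed by the polynomial and prior-dependent prefactors, so one must track the number of minimum-distance neighbor pairs together with the multiplicative factors $\sqrt{p_{lm}/p_{uv}}$ coming from the vanishing prior term. The argument to be completed is that any tying constellation must already be Lemma-optimal on whichever sender sets the minimum distance and can have no fewer minimum-distance pairs than the proposed one, so its prefactor is at least as large; this keeps the limiting ratio bounded by $1$. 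All remaining computations---the eigen-analysis behind Lemma~\ref{MAXSepConstellation}, the Gaussian tail bound, and the rectangle distance enumeration---are routine.
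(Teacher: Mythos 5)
The gap is in the sentence ``the criterion reduces to making $\min(d_1,d_2)$ as large as possible.'' That reduction is false: maximizing the minimum distance of $\mathcal{A}$ is necessary for (\ref{opt0}) but not sufficient, so the tie case you defer to the end is not a technicality --- it is the substance of the theorem. For instance, if $d_{2,\max}<d_{1,\max}$, the feasible design $\tilde d_1=\tilde d_2=d_{2,\max}$ also maximizes $\min(d_1,d_2)$, yet it fails (\ref{opt0}): compared with the theorem's design it has four minimum-distance pairs rather than two, so its error rate is asymptotically larger by a constant factor, and its ratio against the theorem's design tends to a limit strictly greater than $1$. Distinguishing the theorem's design from such competitors is therefore a constant-factor comparison, and the tool you set up cannot make it: sandwiching $P_{\text{err}}$ between a single largest pairwise term and the union bound determines $P_{\text{err}}$ only up to a bounded multiplicative factor, which is precisely the resolution at which tied designs differ. (Formally, the union bound of your design divided by the single-term lower bound of a tied competitor can exceed $1$, so the limit being $\le 1$ does not follow from what you establish.) Your closing sketch --- track the $\sqrt{p_{lm}/p_{uv}}$ factors and count minimum-distance pairs --- points in the right direction, and the counting claim itself is true, but it presupposes an asymptotic \emph{equality} for $P_{\text{err}}$ with exact constants, which your argument never provides.

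The missing ingredient is exactly the paper's Lemma~\ref{lemma0}: at $\gamma_\phi=0$ one has $\zeta=0$, hence $\gamma=0$ in (\ref{gamma}), so $\Delta_{uv,\bar{u}v}$ and $\Delta_{uv,u\bar{v}}$ are independent Gaussians and the correct-decoding probability factorizes asymptotically, yielding $P_{\text{err}}\sim C_1\,Q\!\left(|d_1|/(2\sigma)\right)+C_2\,Q\!\left(|d_2|/(2\sigma)\right)$ with constants $C_1,C_2$ that do not depend on the constellations. (The paper states this with $C_1=C_2=1$; your $\sqrt{p_{lm}/p_{uv}}$ bookkeeping in fact gives the correct prior-dependent constants, but all that matters for the theorem is that they are design-independent.) Once this equality is available, your min-distance detour --- and with it the entire tie analysis --- becomes unnecessary: the asymptotic expression is decreasing in $|d_1|$ and in $|d_2|$ separately, the energy constraint (\ref{PowerConst}) decouples across the two senders, so setting $|d_1|=d_{1,\max}$ and $|d_2|=d_{2,\max}$ via Lemma~\ref{MAXSepConstellation} (with $\eta=0$ and $\eta=\pi/2$, as the orthogonal geometry forces) makes the expression, for every $\sigma$, no larger than that of any competitor, and (\ref{opt0}) follows with no case analysis and no ties. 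This is the paper's proof; I suggest restructuring yours the same way, keeping your (correct) decoupling observation but applying it to both distances jointly rather than to their minimum.
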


To prove this theorem which recovers the result of \cite{Tyson15}, we need the following lemma. Here, $\hat{P}_{\text{err}}(\mathcal{S}_1, \mathcal{S}_2, \sigma^2)$ denotes the right-hand-side of (\ref{UniPe}). 

\begin{lemma}
For given constellations $\mathcal{S}_1=\{S_{10}, S_{11}\}$ and $\mathcal{S}_2=\{S_{20}, S_{21}\}$ with $\gamma_{\phi}=0$ and any source distribution $P_{UV}$, the error probability under joint MAP decoding is asymptotically given by (\ref{UniPe}) as $\sigma^2\rightarrow 0$, i.e., $\lim\limits_{\sigma^2\rightarrow 0} \hat{P}_{\text{err}}(\mathcal{S}_1, \mathcal{S}_2, \sigma^2)/P_{\text{err}}(\mathcal{S}_1, \mathcal{S}_2, \sigma^2)=1$.  
\label{lemma0}
\end{lemma}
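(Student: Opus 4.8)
The plan is to exploit the orthogonality $\gamma_{\phi}=0$ to factor the joint correct-decoding probability into a product of two one-dimensional Gaussian tail probabilities, and then extract the leading order as $\sigma^2\to 0$. Set $d_1\triangleq|S_{11}-S_{10}|$ and $d_2\triangleq|S_{21}-S_{20}|$, so that the four points $A_{uv}$ form a rectangle whose horizontal side $d_1$ is governed by $U$ and whose vertical side $d_2$ is governed by $V$. Since $S_{10},S_{11}$ are real and $S_{20},S_{21}$ are purely imaginary, $A_{\bar{u}v}-A_{uv}$ is real while $A_{u\bar{v}}-A_{uv}$ is imaginary; hence $\zeta=\Re[(A_{10}-A_{00})(A_{01}-A_{00})^{*}]=0$ and, by (\ref{gamma}), the correlation $\gamma$ between $\Delta_{uv,\bar{u}v}$ and $\Delta_{uv,u\bar{v}}$ vanishes. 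The two metrics are then independent, one driven by $\Re[N]$ and the other by $\Im[N]$.

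First I would simplify $P_{\text{c},uv}$ in (\ref{2DER}). With $\gamma=0$ the double integral factors, and because $\zeta=0$ forces $\alpha_{uv}=\pm\sigma^2\ln\frac{p_{00}p_{11}}{p_{01}p_{10}}\to 0$, I would bound the diagonal correction $\beta$ and discard it against the leading terms. This yields
\[
P_{\text{c},uv}\;\approx\;\Big(1-Q\big(t_1\big)\Big)\Big(1-Q\big(t_2\big)\Big),
\]
where, from (\ref{Duvlmmu})--(\ref{Duvlmsigma}), $t_1=\tfrac{d_1}{2\sigma}+\tfrac{\sigma}{d_1}\ln\tfrac{p_{uv}}{p_{\bar{u}v}}$ and $t_2=\tfrac{d_2}{2\sigma}+\tfrac{\sigma}{d_2}\ln\tfrac{p_{uv}}{p_{u\bar{v}}}$. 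Substituting into $P_{\text{err}}=1-\sum_{uv}p_{uv}P_{\text{c},uv}$ from (\ref{MAPPb0}) and expanding $\hat{P}_{\text{err}}=Q(\tfrac{d_1}{2\sigma})+Q(\tfrac{d_2}{2\sigma})-Q(\tfrac{d_1}{2\sigma})Q(\tfrac{d_2}{2\sigma})$, the comparison reduces to the asymptotics of each perturbed tail relative to its unperturbed counterpart, the dominant crossings being those across the shorter side.

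The hard part will be the constant multiplying the leading $Q(\cdot)$. The Gaussian tail is exponentially sensitive, so the $O(\sigma^2)$ prior offsets in $t_1,t_2$ do \emph{not} drop out of the ratio: a direct estimate gives $Q\big(\tfrac{d}{2\sigma}+\tfrac{\sigma}{d}\ln\tfrac{p_{uv}}{p_{lm}}\big)\big/Q\big(\tfrac{d}{2\sigma}\big)\to(p_{lm}/p_{uv})^{1/2}$ as $\sigma\to 0$. Consequently each ordered pair $(uv,\bar{u}v)$ contributes, after the prior weighting, a term $p_{uv}(p_{\bar{u}v}/p_{uv})^{1/2}=\sqrt{p_{uv}p_{\bar{u}v}}$, and the two orientations combine to $2\sqrt{p_{uv}p_{\bar{u}v}}$.

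The crux is therefore to show that, summed over all source pairs, these distribution-dependent factors reproduce exactly the distribution-free product form of (\ref{UniPe}). I expect this to be the delicate step: one must verify that the aggregate constant equals $1$ rather than some other value, and it is transparent that it does so in the symmetric case $p_{uv}=p_{\bar{u}v}$ (in particular for the uniform source, by the equality condition of the arithmetic--geometric mean inequality). Once this reduction is secured, the auxiliary estimates --- the negligibility of $\beta$ and of the longer diagonal crossings at distance $\sqrt{d_1^2+d_2^2}$ --- follow from routine Gaussian tail bounds, and (\ref{MAPPb0}) then delivers the claimed limit.
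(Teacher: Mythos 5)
Your reduction is correct as far as it goes, and it is in fact sharper than the paper's own argument; but the step you defer as ``the delicate step'' is not a step you can close --- it fails, and with it the lemma as stated. Your tail estimate $Q\big(\tfrac{d}{2\sigma}+\tfrac{\sigma}{d}\ln\tfrac{p_{uv}}{p_{lm}}\big)\big/Q\big(\tfrac{d}{2\sigma}\big)\rightarrow\sqrt{p_{lm}/p_{uv}}$ is exact, the diagonal terms and $\beta$ are indeed negligible, and aggregating the nearest-neighbour crossings as you do gives
\[
P_{\text{err}} \sim \Big(2\sqrt{p_{00}p_{10}}+2\sqrt{p_{01}p_{11}}\Big)\,Q\!\left(\frac{|d_1|}{2\sigma}\right) + \Big(2\sqrt{p_{00}p_{01}}+2\sqrt{p_{10}p_{11}}\Big)\,Q\!\left(\frac{|d_2|}{2\sigma}\right),
\]
whereas $\hat{P}_{\text{err}} \sim Q\big(\tfrac{|d_1|}{2\sigma}\big)+Q\big(\tfrac{|d_2|}{2\sigma}\big)$. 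By the AM--GM inequality each bracketed coefficient is at most $1$, with equality if and only if $p_{00}=p_{10}$ and $p_{01}=p_{11}$ (respectively $p_{00}=p_{01}$ and $p_{10}=p_{11}$), i.e.\ essentially only when the relevant source is uniform and independent of the other. For the correlated sources that are the point of this paper the coefficients are strictly less than $1$, so $\lim_{\sigma^2\to 0}\hat{P}_{\text{err}}/P_{\text{err}}$ is strictly greater than $1$: for instance, with $|d_1|=|d_2|$ and the paper's Case-1 distribution $[0.091, 0.009, 0.009, 0.891]$ the limit is about $4.2$. Hence the lemma, read literally (``any source distribution $P_{UV}$''), is false; no completion of your argument (or any other) can prove it. What survives is either the weaker statement that $\hat{P}_{\text{err}}$ and $P_{\text{err}}$ have the same exponential order, or the corrected asymptotic equivalence with the distribution-dependent coefficients displayed above.

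This is also exactly where your attempt diverges from, and improves on, the paper's own proof. The paper deduces Lemma \ref{lemma0} from (\ref{CROrthogonal}), the assertion that $\big(1-Q\big(\tfrac{|d_1|}{2\sigma}\big)\big)\big(1-Q\big(\tfrac{|d_2|}{2\sigma}\big)\big)\big/P_{\text{c},uv}\rightarrow 1$. That limit is true but vacuous --- numerator and denominator both tend to $1$ --- and it says nothing about the ratio of the two \emph{vanishing} error probabilities in the lemma; what would be needed is $\big(1-(1-Q)(1-Q)\big)\big/\big(1-P_{\text{c},uv}\big)\rightarrow 1$, and your $\sqrt{p_{lm}/p_{uv}}$ factors show precisely that this fails in general because the $O(\sigma^2)$ MAP threshold shifts do not wash out of Gaussian tails. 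So treat your finding as an error in the paper rather than a gap in your approach. Note that Theorem \ref{thm0} itself is not endangered: the offending coefficients depend only on $p_{UV}$ and not on the constellations, so choosing $|d_1|=d_{1,\max}$ and $|d_2|=d_{2,\max}$ via Lemma \ref{MAXSepConstellation} still minimizes the asymptotic error rate in the sense of (\ref{opt0}); only Lemma \ref{lemma0} needs to be restated, either with the corrected constants or restricted to sources satisfying the symmetry conditions above.
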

\begin{proof}
If $\gamma_{\phi}=0$, then $\gamma=0$ because of $\zeta=0$ (see (\ref{gamma})) and the joint pdf $f_{\Delta_{uv, \bar{u}v}, \Delta_{uv, u\bar{v}}}(x, y)$ can be written as a product of two Gaussian density functions. When $\sigma^2\rightarrow 0$, we further have $\alpha_{uv}\rightarrow 0$ and $\beta\rightarrow 0$ for all $(u, v)\in\{0, 1\}^2$. 
It is easy to see that for any $(u, v)$ 
\begin{IEEEeqnarray}{rCl}
\lim_{\sigma^2\rightarrow 0}\frac{1}{P_{\text{c}, uv}}\left(1-Q\left(\frac{|S_{11}-S_{10}|}{2\sigma}\right)\right)\left(1-Q\left(\frac{|S_{21}-S_{20}|}{2\sigma}\right)\right)=1, 
\label{CROrthogonal}
\end{IEEEeqnarray}
which implies the lemma. 
\end{proof}

The result presented in Lemma~\ref{lemma0} is intuitively clear since joint MAP decoding reduces to maximum likelihood decoding for high SNRs, where the decoding performance is mainly dominated by the distance between signal points independently of the source distribution. 
Based on the two previous lemmas, we now prove Theorem \ref{thm0}.

\begin{proof}[Proof of Theorem \ref{thm0}]
By Lemma~\ref{lemma0}, to minimize the error probability in the high SNR regime, the magnitudes of $S_{11}-S_{10}$ and $S_{21}-S_{20}$ in (\ref{UniPe}) should be as large as possible. Under the energy constraint (\ref{PowerConst}), the signal points with the largest separation distance are given in Lemma~\ref{MAXSepConstellation}. 
By respectively choosing $\eta=0$ and $\eta=\pi/2$ for sender $1$ and sender $2$, the optimized constellations are obtained. 
\end{proof}

\begin{table*}[t]
\centering
\caption{The probability of correct decoding for $\gamma_{\phi}=\pm 1$ in the high SNR regime.}
\label{table1}
\begin{tabular}{|c|c|c|c|l|}
\cline{1-5}
Case & $d_1$ & $d_2$ & $|d_1|\lessgtr|d_2|$ & \multicolumn{1}{c|}{$\tilde{P}_{\text{c}}$}                                                                      \\ \hline
I & $>0$  & $>0$  & $>$                  & $1-Q\left(\frac{d_2}{2\sigma}\right)-(p_{10}+p_{01})Q\left(\frac{d_1-d_2}{2\sigma}\right)$               \\ \hline
II & $>0$  & $>0$  & $<$                  & $1-Q\left(\frac{d_1}{2\sigma}\right)-(p_{10}+p_{01})Q\left(\frac{d_2-d_1}{2\sigma}\right)$               \\ \hline
III & $>0$  & $<0$  & $>$                  & $Q\left(\frac{d_2}{2\sigma}\right)-(p_{00}+p_{11})Q\left(\frac{d_1+d_2}{2\sigma}\right)$                 \\ \hline
IV & $>0$  & $<0$  & $<$                  & $(p_{10}+p_{01})-Q\left(\frac{d_1}{2\sigma}\right)+(p_{00}+p_{11})Q\left(\frac{d_1+d_2}{2\sigma}\right)$ \\ \hline
V & $<0$  & $>0$  & $>$                  & $(p_{10}+p_{01})-Q\left(\frac{d_2}{2\sigma}\right)+(p_{00}+p_{11})Q\left(\frac{d_1+d_2}{2\sigma}\right)$ \\ \hline
VI & $<0$  & $>0$  & $<$                  & $Q\left(\frac{d_1}{2\sigma}\right)-(p_{00}+p_{11})Q\left(\frac{d_1+d_2}{2\sigma}\right)$                 \\ \hline
VII & $<0$  & $<0$  & $>$                  & $Q\left(\frac{d_2}{2\sigma}\right)-(p_{10}+p_{01})Q\left(\frac{d_2-d_1}{2\sigma}\right)$                 \\ \hline
VIII & $<0$  & $<0$  & $<$                  & $Q\left(\frac{d_1}{2\sigma}\right)-(p_{10}+p_{01})Q\left(\frac{d_1-d_2}{2\sigma}\right)$                 \\ \hline
\end{tabular}
\end{table*}

\subsection{Design of Signal Constellations for $\gamma_{\phi}=\pm 1$}{\label{III-B}}
Given signal constellations $\mathcal{S}_1$ and $\mathcal{S}_2$ (which lie on the real line in this case) and source distribution $p_{UV}$, the decision region for each message pair on the combined constellation can be identified and the MAP decoding performance can be readily evaluated. 
In the Appendix, we provide an example to demonstrate this procedure, in which certain set of conditions such as $D_{10, 00}>0$, $D_{01, 00}>0$, and $|D_{10, 00}|-|D_{01, 00}|>0$ are imposed on the signal points in order to explicitly characterize the decision region $\Omega$ given in (\ref{deltaeq}). 
However, to design optimal constellations, all such possible conditions on the signs of $D_{10, 00}$, $D_{01, 00}$, and $|D_{10, 00}|-|D_{01, 00}|$ must be taken into account. 
According to the relative position of $S_{1u}$ and $S_{2v}$, $u, v\in\{0, 1\}$, and the distance between them, there are eight cases that need to be considered. 
Based on (\ref{deltaeq}), each of these cases will lead to a different decision region for which we derive the MAP decoding performance in closed form. 
Then the case that achieves the minimum error rate is chosen as the optimal design. 
 
Although the above approach is straightforward, we note that tedious numerical computations and comparisons are required to obtain the optimal constellation.
To avoid designing signal constellations numerically, we construct constellations under the high SNR assumption as in the case $\gamma_{\phi}=0$. 
In this way, an explicit construction of the constellation is obtained, and we will see later that such a construction results in an negligible performance degradation relative to the truly optimal construction. 

For $j=1, 2$, let $d_j\triangleq S_{j1}-S_{j0}$. Here, $d_j$ is real-valued with $|d_j|\in(0, d_{j, \max}]$, where $d_{j, \max}$ is given in Lemma~\ref{MAXSepConstellation}. 
Without loss of generality, we assume that $p_1$, $p_2$, $E_1$, and $E_2$ are such that $d_{2, \max}\le d_{1, \max}$. 
For the case of $d_{2, \max}> d_{1, \max}$, we only need to swap the role of the two senders in the main result. 

\begin{theorem}
Suppose that $d_{2, \max}\le d_{1, \max}$. For $\gamma_\phi=\pm 1$ and high SNRs, the optimized constellation (in the sense of (\ref{opt0})) for sender $1$ is given by (\ref{OptS1}), while the optimized constellation for sender $2$ is given by
\begin{IEEEeqnarray}{c}
S_{20}=-\sqrt{\frac{1-p_2}{p_2}E_2},\ S_{21}=\sqrt{\frac{p_2}{1-p_2}E_2}\nonumber
\end{IEEEeqnarray}
if $|d_2|\ge d_{2, \max}$, and otherwise we have
\begin{IEEEeqnarray}{c}
S_{20}=S_{21}-d_2,\ S_{21}=d_2p_2\pm\sqrt{(d_2)^2p_2(p_2-1)+E_2}\nonumber
\end{IEEEeqnarray}
where
\[
d_2=\left\{
\begin{array}{l}
-4\sigma^2\ln(p_{10}+p_{01})/d_{1, \max}+d_{1, \max}/2,\ \text{if}\ (p_{00}+p_{11}) \ge (p_{10}+p_{01}),\\ 
4\sigma^2\ln(p_{11}+p_{00})/d_{1, \max}-d_{1, \max}/2,\ \text{if}\ (p_{00}+p_{11}) < (p_{10}+p_{01}),
\end{array}\right.
\]
with $d_{1, \max}\allowbreak=\sqrt{E_1/(p_1(1-p_1))}$. 
\label{thm1}
\end{theorem}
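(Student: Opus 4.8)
The plan is to turn the minimization of $P_{\text{err}}$ into a one-dimensional problem in the single parameter $d_2=S_{21}-S_{20}$, using Table~\ref{table1} for the high-SNR error rate together with the reflection symmetry of the constellation. Because $\gamma_\phi=\pm1$, the four combined points $A_{uv}=S_{1u}+S_{2v}$ all lie on the real axis, at the locations $0,\,d_2,\,d_1,\,d_1+d_2$ relative to $A_{00}$, so the entire configuration is described by $(d_1,d_2)$ subject to $|d_j|\le d_{j,\max}$. Reflecting every signal point through the origin is an isometry that leaves both the priors $p_{uv}$ and the Gaussian noise law invariant while mapping $(d_1,d_2)\mapsto(-d_1,-d_2)$, so I may assume $d_1>0$ throughout. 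Together with the hypothesis $d_{2,\max}\le d_{1,\max}$, this confines the design—once $d_1$ is taken as large as possible—to Case~I (when $d_2>0$) or Case~III (when $d_2<0$) of Table~\ref{table1}, since Cases~II and IV would require $|d_2|>d_{1,\max}\ge d_{2,\max}$.

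Next I would show that sender~1 should use the maximum separation. For each fixed $d_2$, the parameter $d_1$ enters the relevant rows of Table~\ref{table1} only through a single term $-(\,\cdot\,)\,Q((d_1\mp d_2)/2\sigma)$ with a positive coefficient, so $\tilde P_{\text{c}}$ is nondecreasing in $d_1$ over the whole feasible range $(0,d_{1,\max}]$ (the two adjoining rows agreeing at $d_1=|d_2|$). Hence the error is minimized at $d_1=d_{1,\max}$, which by Lemma~\ref{MAXSepConstellation} with $\eta=0$ is exactly the constellation (\ref{OptS1}); this also certifies that we stay in Case~I or Case~III.

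It then remains to optimize over $d_2$. In both surviving cases the high-SNR error rate has the common form $1-\tilde P_{\text{c}}=Q(|d_2|/2\sigma)+c\,Q((d_{1,\max}-|d_2|)/2\sigma)$, with $c=p_{10}+p_{01}$ in Case~I and $c=p_{00}+p_{11}$ in Case~III. Setting the derivative in $|d_2|$ to zero gives the balancing condition $\exp(-|d_2|^2/8\sigma^2)=c\,\exp(-(d_{1,\max}-|d_2|)^2/8\sigma^2)$, i.e. $|d_2|=d_{1,\max}/2-4\sigma^2(\ln c)/d_{1,\max}$, which reproduces the two branches of the claimed formula for $d_2$ once the sign dictated by each case is restored. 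The objective is unimodal in $|d_2|$, so this stationary point is the unique interior minimizer; when it falls beyond the feasibility bound $d_{2,\max}$ the constrained optimum lies at $|d_2|=d_{2,\max}$, which is the maximum-separation branch of the theorem. To choose between Case~I and Case~III I would invoke the envelope theorem: the minimized error is strictly increasing in the coefficient $c$, so the smaller of $p_{10}+p_{01}$ and $p_{00}+p_{11}$ must be used—precisely the dichotomy $(p_{00}+p_{11})\gtrless(p_{10}+p_{01})$ in the statement. Finally, substituting the optimal $d_2$ together with $S_{20}=S_{21}-d_2$ into the energy constraint (\ref{PowerConst}) yields the quadratic $S_{21}^2-2p_2d_2S_{21}+(p_2d_2^2-E_2)=0$, whose roots are the asserted $S_{21}=d_2p_2\pm\sqrt{d_2^2p_2(p_2-1)+E_2}$.

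The delicate part is not the calculus but justifying the asymptotics in the ratio sense of (\ref{opt0}). I must verify that the two-term expressions in Table~\ref{table1} track the exact $P_{\text{err}}$ up to a factor tending to $1$ (the discarded $Q$-terms, which pair non-adjacent combined points, being exponentially negligible), and that minimizing these leading terms separately for each $\sigma$ does minimize the limiting error ratio against every competing, possibly $\sigma$-dependent, constellation. A related subtlety is that Case~I and Case~III share the same leading exponent $-d_{1,\max}^2/(32\sigma^2)$, so the choice between them is governed entirely by the subdominant behavior; the monotonicity-in-$c$ (envelope) argument is what makes this comparison clean.
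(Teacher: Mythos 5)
Your route is essentially the paper's: reduce to the real line via Table~\ref{table1}, use point-reflection symmetry to discard half the cases (the paper's Lemma~\ref{exclude1}), set $d_1=d_{1,\max}$, find the optimal $|d_2|$ by a stationary-point/unimodality argument (the paper's Lemma~\ref{property2}), choose between Case~I and Case~III by comparing the coefficients $p_{10}+p_{01}$ and $p_{00}+p_{11}$, and recover $S_{21}$ from the energy-constraint quadratic. The stationary-point computation, the case-selection argument, and the quadratic all match the paper and are correct.

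However, your justification of $d_1=d_{1,\max}$ has a genuine gap. You claim that for fixed $d_2$ the parameter $d_1$ enters ``the relevant rows'' of Table~\ref{table1} only through a single term, so that $\tilde P_{\text{c}}$ is nondecreasing in $d_1$ on all of $(0,d_{1,\max}]$. That is a misreading of the table: as $d_1$ sweeps $(0,d_{1,\max}]$ with $d_2>0$ fixed, the configuration passes through Case~II (where $d_1<d_2$) before entering Case~I, and in the Case~II row,
\[
\tilde P_{\text{c}}(\text{Case II}) \;=\; 1-Q\!\left(\frac{d_1}{2\sigma}\right)-(p_{10}+p_{01})\,Q\!\left(\frac{d_2-d_1}{2\sigma}\right),
\]
$d_1$ appears in \emph{both} $Q$-terms. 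Its derivative in $d_1$ is proportional to $\exp(-d_1^2/8\sigma^2)-(p_{10}+p_{01})\exp(-(d_2-d_1)^2/8\sigma^2)$, which at high SNR changes sign near $d_1\approx d_2/2$, so the piecewise function is \emph{not} monotone in $d_1$. Concretely, take $d_{2,\max}=d_{1,\max}$ and fix $d_2=d_{1,\max}$: the best $d_1$ is then an interior point near $d_{1,\max}/2$ (error tending to $0$), whereas $d_1=d_{1,\max}$ gives error approaching $(p_{10}+p_{01})/2$; the same failure occurs in Case~IV when $d_2<0$. Note also the circularity in your setup: you dismiss Cases~II and~IV as infeasible ``once $d_1$ is taken as large as possible,'' but those are precisely the cases that must be examined to prove that taking $d_1$ maximal is optimal. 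The conclusion $d_1=d_{1,\max}$ is of course still true, but it has to be established the way the paper does it (Lemma~\ref{CaseIIBad}): compare the \emph{maxima} of Cases~I and~II over their feasible sets, observing that Case~II's feasible region is contained in Case~I's exactly because $d_{2,\max}\le d_{1,\max}$ --- which is also where that hypothesis genuinely enters the proof. The remainder of your argument (including your candid flagging of the ratio-asymptotics issue, which the paper treats at the same level of rigor) goes through once this step is repaired.
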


To prove this theorem, for each case we first find a closed form expression of the system's correct decoding probability $\tilde{P}_{\text{c}}$ in the high SNR regime. The conditions on the signal points for all eight cases are listed in Table~\ref{table1}. 
For Case~I, $\tilde{P}_{\text{c}}$ is obtained using the derivation in the Appendix by applying the high SNR assumption. 
The other $\tilde{P}_\text{c}$'s are derived in a similar fashion. 
From Table~\ref{table1}, we observe that by symmetry some cases can be disregarded without degrading our design. We need the following lemmas to simplify the procedure. 

\begin{lemma}
The maximum of $\tilde{P}_{\text{c}}(\text{Case VII})$ is the same as the maximum of $\tilde{P}_{\text{c}}(\text{Case I})$.  
\label{exclude1}
\end{lemma}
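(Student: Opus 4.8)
The plan is to exploit the sign symmetry relating the two cases and thereby reduce the claim to an elementary identity for the Gaussian $Q$-function. Observe first that the closed-form expression for $\tilde{P}_{\text{c}}$ in Table~\ref{table1} depends on the two constellations only through the signed separations $d_1$ and $d_2$ (together with the fixed source probabilities), so ``maximizing $\tilde{P}_{\text{c}}(\text{Case VII})$'' amounts to maximizing over all $d_1,d_2$ with $d_1<0$, $d_2<0$, and $|d_1|>|d_2|$ subject to $|d_j|\le d_{j,\max}$, and similarly for Case~I but with $d_1,d_2>0$. My aim is to show that these two constrained optimization problems have the same optimal value, by producing a value-preserving bijection between their feasible sets.

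First I would rewrite the Case~VII entry in terms of the magnitudes $|d_1|,|d_2|$. Using $Q(-x)=1-Q(x)$ with $d_2<0$ gives $Q(d_2/(2\sigma))=1-Q(|d_2|/(2\sigma))$, and since $|d_1|>|d_2|$ we have $d_2-d_1=|d_1|-|d_2|>0$, so $Q((d_2-d_1)/(2\sigma))=Q((|d_1|-|d_2|)/(2\sigma))$. Substituting these into the Case~VII formula yields $\tilde{P}_{\text{c}}(\text{Case VII})=1-Q(|d_2|/(2\sigma))-(p_{10}+p_{01})Q((|d_1|-|d_2|)/(2\sigma))$, which is \emph{precisely} the Case~I expression with $d_1,d_2$ replaced by $|d_1|,|d_2|$.

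Next I would check that the change of variables $(d_1,d_2)\mapsto(|d_1|,|d_2|)$ carries the Case~VII feasible region onto the Case~I feasible region: both require the ordering $|d_1|>|d_2|$ and the same caps $|d_j|\le d_{j,\max}$ coming from (\ref{PowerConst}) via Lemma~\ref{MAXSepConstellation}, and the coefficient $(p_{10}+p_{01})$ is identical in the two rows. A cleaner way to phrase the same fact at the level of constellations is that reflecting both senders' points through the origin, $S_{j0}\mapsto -S_{j0}$ and $S_{j1}\mapsto -S_{j1}$, sends a Case~VII configuration to a Case~I configuration, leaves every magnitude (hence the energy constraint) unchanged, negates each $d_j$, and---because the Gaussian noise is symmetric and the MAP metric uses the unchanged source probabilities---leaves $\tilde{P}_{\text{c}}$ invariant. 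Either formulation exhibits the desired bijection, so the two maxima coincide.

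I do not anticipate a serious obstacle here; the only point requiring care is confirming that the sign flips affect \emph{only} the $Q$-arguments and not the probability coefficient, so that the reindexed Case~VII objective lands exactly on the Case~I objective rather than on some other row of Table~\ref{table1}. Verifying this, together with the matching of the feasible domains, completes the argument.
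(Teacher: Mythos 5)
Your proposal is correct and follows essentially the same route as the paper: the paper's proof also substitutes $\bar{d}_j=-d_j$ (i.e., passes to the magnitudes), applies $Q(-x)=1-Q(x)$ to turn the Case~VII entry into the Case~I expression, and notes that the parameter domains $(0,d_{j,\max}]$ and the coefficient $(p_{10}+p_{01})$ coincide, so the two maxima agree. Your added remark that the substitution corresponds to reflecting both constellations through the origin is a nice geometric gloss on the same bijection, not a different argument.
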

\begin{proof}
For Case~VII, by defining $\bar{d}_j=-d_j$ for $j=1, 2$, the correct decoding probability for high SNRs can be expressed as
\begin{IEEEeqnarray}{lCl}
\tilde{P}_{\text{c}}(\text{Case VII}) &=& Q\left(\frac{-\bar{d}_2}{2\sigma}\right)-(p_{01}+p_{10})Q\left(\frac{\bar{d}_1-\bar{d}_2}{2\sigma}\right)\nonumber \\
&=& \nonumber 1-Q\left(\frac{\bar{d}_2}{2\sigma}\right)-(p_{01}+p_{10})Q\left(\frac{\bar{d}_1-\bar{d}_2}{2\sigma}\right),\nonumber
\end{IEEEeqnarray}
where $\bar{d}_j\in (0, d_{j, \max}]$. 
The above expression reduces to $\tilde{P}_\text{c}(\text{Case I})$ after the substitutions $d_1\rightarrow\bar{d}_1$ and $d_2\rightarrow\bar{d}_2$. Moreover, the domain of the parameters are also the same, i.e., $d_j, \bar{d}_j\in (0, d_{j, \max}]$, $j=1, 2$.  Therefore, for any noise variance $\sigma^2$, we have
\[
\max_{d_1, d_2} \tilde{P}_{\text{c}}(\text{Case I})=\max_{\bar{d}_1, \bar{d}_2} \tilde{P}_{\text{c}}(\text{Case VII}). 
\]
\end{proof}

The argument of Lemma~\ref{exclude1} can be extended to Cases~II and VIII, Cases~III and V, and Cases~IV and VI. 
Since the corresponding proofs are almost identical, we omit the details. 
Based on these results, we can exclude Cases IV, V, VII, and VIII from consideration. 

\begin{lemma}
The maximum of $\tilde{P}_{\text{c}}(\text{Case II})$ cannot be less than the maximum of $\tilde{P}_{\text{c}}(\text{Case I})$. 
\label{CaseIIBad}
\end{lemma}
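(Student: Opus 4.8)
The plan is to reduce the comparison of the two maxima to a single monotone two-variable function evaluated over two explicitly described feasible regions. Introduce
\[
F(s,t)\triangleq 1-Q\!\left(\frac{s}{2\sigma}\right)-(p_{10}+p_{01})\,Q\!\left(\frac{t}{2\sigma}\right),\qquad s,t\ge 0.
\]
Reading off Table~\ref{table1}, the substitution $(s,t)=(d_2,\,d_1-d_2)$ gives $\tilde{P}_{\text{c}}(\text{Case I})=F(s,t)$, while $(s,t)=(d_1,\,d_2-d_1)$ gives $\tilde{P}_{\text{c}}(\text{Case II})=F(s,t)$. Since $Q$ is strictly decreasing and the coefficients $1$ and $p_{10}+p_{01}>0$ are positive (recall $p_{uv}>0$), $F$ is strictly increasing in each argument. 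Hence in both cases maximizing the correct-decoding probability is exactly maximizing the common function $F$ over the feasible set that the case imposes on $(s,t)$, and the whole lemma becomes a statement about these two feasible sets.

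First I would translate the constraints into the $(s,t)$-plane. Recall $d_1\in(0,d_{1,\max}]$, $d_2\in(0,d_{2,\max}]$, and the standing assumption $d_{2,\max}\le d_{1,\max}$. For Case~I the defining inequality is $d_1>d_2$, so with $s=d_2$ (bounded by $d_{2,\max}$) and $s+t=d_1$ (bounded by $d_{1,\max}$) the feasible region is
\[
R_{\mathrm{I}}=\{(s,t):0<s\le d_{2,\max},\ t>0,\ s+t\le d_{1,\max}\}.
\]
For Case~II the defining inequality is $d_2>d_1$, and now the binding bound is carried by the \emph{larger} separation $d_2=s+t\le d_{2,\max}$, yielding
\[
R_{\mathrm{II}}=\{(s,t):s>0,\ t>0,\ s+t\le d_{2,\max}\}.
\]

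The key step is the set inclusion $R_{\mathrm{II}}\subseteq R_{\mathrm{I}}$, which is immediate from $d_{2,\max}\le d_{1,\max}$: any $(s,t)$ with $s+t\le d_{2,\max}$ satisfies both $s\le s+t\le d_{2,\max}$ and $s+t\le d_{2,\max}\le d_{1,\max}$, so both constraints defining $R_{\mathrm{I}}$ hold. Combining this inclusion with the monotonicity of $F$ gives $\sup_{R_{\mathrm{II}}}F\le\sup_{R_{\mathrm{I}}}F$, and since $F$ is continuous and each supremum is attained on the corresponding compact closure, the largest correct-decoding probability achievable in Case~I is at least that achievable in Case~II; this is precisely the comparison needed to exclude Case~II from the design search. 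I expect the only delicate point to be the bookkeeping of the two asymmetric bounds when passing to the $(s,t)$ description—specifically, recognizing that in Case~II it is $d_2$ (not $d_1$) that is capped by the smaller value $d_{2,\max}$, which is exactly what forces $R_{\mathrm{II}}$ to sit inside $R_{\mathrm{I}}$ and thereby pins down the direction of the inequality.
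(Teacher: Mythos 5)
Your proof is correct and takes essentially the same route as the paper's: the paper defines the same auxiliary function $G(y_1,y_2)=1-Q\left(\frac{y_1}{2\sigma}\right)-(p_{10}+p_{01})Q\left(\frac{y_2}{2\sigma}\right)$, uses the same substitutions $(y_1,y_2)=(d_2,\,d_1-d_2)$ for Case~I and $(d_1,\,d_2-d_1)$ for Case~II, and obtains $\tilde{P}^*_{\text{c}}(\text{Case I})\ge\tilde{P}^*_{\text{c}}(\text{Case II})$ from the fact that the Case~II feasible set (per-$y_1$ interval $(0,d_{2,\max}-y_1]$) sits inside the Case~I feasible set ($(0,d_{1,\max}-y_1]$) because $d_{2,\max}\le d_{1,\max}$, which is exactly your inclusion $R_{\mathrm{II}}\subseteq R_{\mathrm{I}}$. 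One remark: like the paper's own proof, you prove the inequality in the direction opposite to the lemma's literal wording (``cannot be less than''), which is evidently a typo in the statement, since the conclusion actually needed to exclude Case~II is $\max\tilde{P}_{\text{c}}(\text{Case II})\le\max\tilde{P}_{\text{c}}(\text{Case I})$ --- precisely what you established.
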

\begin{proof}
Define 
\begin{equation}
G(y_1, y_2) = 1 - Q\left(\frac{y_1}{2\sigma}\right) - (p_{10}+p_{01})Q\left(\frac{y_2}{2\sigma}\right)\nonumber
\end{equation}
for $y_1, y_2 >0$. 
For the maximum of correct decoding we have
\begin{equation}
\tilde{P}^*_{\text{c}}(\text{Case I})\triangleq\max_{y_1\in(0, d_{2, \max}]} \max_{y_2\in(0, d_{1, \max}-y_1]}G(y_1, y_2)\nonumber
\end{equation}
and 
\begin{equation}
\tilde{P}^*_{\text{c}}(\text{Case II})\triangleq\max_{y_1\in(0, d_{2, \max}]} \max_{y_2\in(0, d_{2, \max}-y_1]}G(y_1, y_2). \nonumber
\end{equation}
Since $d_{1, \max}\ge d_{2, \max}$, we obtain $\tilde{P}^*_{\text{c}}(\text{Case I})\ge \tilde{P}^*_{\text{c}}(\text{Case II})$. 
\end{proof}

By a similar argument, we can also show that $\tilde{P}^{*}_{\text{c}}(\text{Case VI})\allowbreak\le \tilde{P}^{*}_{\text{c}}(\text{Case III})$. Consequently, it suffices to design the optimized constellations for Cases~I and III, and the design with the larger $\tilde{P}_{\text{c}}$ of the two cases is the best design. 
The next two lemmas help us explicitly derive the optimized constellations for these two cases.

\begin{lemma}
For any fixed $d_2\in(0, d_{2, \max}]$, $\tilde{P}_{\text{c}}(\text{Case I})$ is increasing in $d_1$. 
\label{property1}
\end{lemma}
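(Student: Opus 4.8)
The plan is to prove the claim by direct differentiation, exploiting the very simple way in which the Case~I expression depends on $d_1$. First I would recall from Table~\ref{table1} that
\[
\tilde{P}_{\text{c}}(\text{Case I}) = 1 - Q\left(\frac{d_2}{2\sigma}\right) - (p_{10}+p_{01})\,Q\left(\frac{d_1-d_2}{2\sigma}\right),
\]
and observe that, with $d_2$ held fixed, the first two terms are constant in $d_1$, so that the monotonicity of $\tilde{P}_{\text{c}}(\text{Case I})$ is governed entirely by the single term $Q\big((d_1-d_2)/(2\sigma)\big)$. This reduces the lemma to understanding how that one $Q$-term varies with $d_1$.

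Next I would differentiate with respect to $d_1$. Using $Q'(x) = -(1/\sqrt{2\pi})\exp(-x^2/2)$ together with the chain rule, I expect to obtain
\[
\frac{\partial}{\partial d_1}\tilde{P}_{\text{c}}(\text{Case I}) = \frac{p_{10}+p_{01}}{2\sigma\sqrt{2\pi}}\exp\left(-\frac{(d_1-d_2)^2}{8\sigma^2}\right).
\]
The coefficient $p_{10}+p_{01}$ is a positive constant by the standing assumption $p_{uv}>0$ for all $u,v$, the exponential factor is strictly positive, and $\sigma>0$; hence the derivative is strictly positive for every admissible $d_1$. This shows that $\tilde{P}_{\text{c}}(\text{Case I})$ is strictly increasing in $d_1$ throughout its natural domain $d_1\in(d_2, d_{1,\max}]$, for any fixed $d_2\in(0,d_{2,\max}]$, which is precisely the assertion.

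The only point requiring any care is to confirm that the coefficient multiplying the $d_1$-dependent term is a strictly positive constant: this is immediate, since $p_{10}+p_{01}$ does not involve $d_1$ and is positive because each $p_{uv}>0$. Consequently there is no genuine obstacle in this argument—it is essentially a one-line derivative computation resting on the sign of $Q'$—and the main task is simply to transcribe the Case~I expression from Table~\ref{table1} correctly and invoke the strict monotonicity of the Gaussian $Q$-function. I would note in passing that this lemma is exactly what is needed to drive the optimal $d_1$ to its largest admissible value $d_{1,\max}$, consistent with sender~$1$ using the maximum-separation constellation (\ref{OptS1}) in Theorem~\ref{thm1}.
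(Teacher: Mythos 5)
Your proposal is correct and follows essentially the same route as the paper: differentiate the Case~I expression from Table~\ref{table1} with respect to $d_1$ and observe that the derivative, a positive multiple of a Gaussian kernel, is strictly positive since $p_{10}+p_{01}>0$. In fact your constant $\frac{p_{10}+p_{01}}{2\sigma\sqrt{2\pi}}$ is the correct one (the paper's displayed derivative omits the factor $\tfrac{1}{2}$ from the chain rule), though this has no bearing on the sign and hence on the conclusion.
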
  
\begin{proof}
Taking the partial derivative of $\tilde{P}_{\text{c}}(\text{Case I})$ with respect to $d_1$ yields  
\begin{equation}
\frac{\partial \tilde{P}_{\text{c}}(\text{Case I})}{\partial d_1}=(p_{01}+p_{10})\frac{1}{\sqrt{2\pi}\sigma}\exp(\frac{-(d_1-d_2)^2}{8\sigma^2})> 0. \nonumber
\end{equation}
\end{proof}

\begin{lemma}
$\tilde{P}_{\text{c}}(\text{Case I})$ is a concave function in $d_2$ for $d_1=d_{1, \max}$.
\label{property2} 
\end{lemma}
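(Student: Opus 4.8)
The plan is to establish concavity directly, by showing that the second derivative of $\tilde{P}_{\text{c}}(\text{Case I})$ with respect to $d_2$ is nonpositive throughout the admissible range $d_2\in(0, d_{2, \max}]$, with $d_1$ held fixed at $d_{1, \max}$. Recall from Table~\ref{table1} that
\[
\tilde{P}_{\text{c}}(\text{Case I})=1-Q\!\left(\frac{d_2}{2\sigma}\right)-(p_{10}+p_{01})Q\!\left(\frac{d_{1, \max}-d_2}{2\sigma}\right).
\]
First I would differentiate once in $d_2$, exactly as in the proof of Lemma~\ref{property1}, but now noting that $d_2$ enters the first $Q$-term directly and the second through $d_{1, \max}-d_2$, so that the two contributions carry opposite signs via $Q'(x)=-\frac{1}{\sqrt{2\pi}}e^{-x^2/2}$.

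The second step is to differentiate once more. Each Gaussian density picks up a linear prefactor from the chain rule, and after collecting terms one obtains
\[
\frac{\partial^2\tilde{P}_{\text{c}}(\text{Case I})}{\partial d_2^2}=-\frac{1}{8\sqrt{2\pi}\,\sigma^3}\left[d_2\,e^{-d_2^2/(8\sigma^2)}+(p_{10}+p_{01})\,(d_{1, \max}-d_2)\,e^{-(d_{1, \max}-d_2)^2/(8\sigma^2)}\right].
\]
The final step is a sign check on the bracketed quantity. Since $p_{uv}>0$ for all $(u,v)$ we have $p_{10}+p_{01}>0$; on the domain $d_2>0$; and, crucially, the standing assumption $d_{2, \max}\le d_{1, \max}$ together with $d_2\le d_{2, \max}$ gives $d_{1, \max}-d_2\ge d_{1, \max}-d_{2, \max}\ge 0$. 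Hence both summands in the bracket are nonnegative, the whole expression is $\le 0$, and concavity follows.

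The computation itself is routine; the only substantive point, and the place where the hypothesis of the lemma is genuinely used, is pinning down the sign of $d_{1, \max}-d_2$. This is exactly where the convention $d_{2, \max}\le d_{1, \max}$, combined with the restriction of $d_2$ to $(0, d_{2, \max}]$, enters: were $d_{1, \max}-d_2$ allowed to be negative, the second summand would flip sign and concavity could fail. I therefore expect no genuine obstacle beyond carefully tracking this domain constraint.
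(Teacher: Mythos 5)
Your proof is correct and follows essentially the same route as the paper: differentiate $\tilde{P}_{\text{c}}(\text{Case I})$ twice in $d_2$ with $d_1=d_{1,\max}$ fixed, and conclude nonpositivity of the second derivative from $d_2>0$ and $d_{1,\max}-d_2\ge d_{1,\max}-d_{2,\max}\ge 0$. Your version even states the sign argument more explicitly than the paper does (and your prefactor $\sigma^3$ is the correct one, where the paper's displayed second derivative has a harmless $\sigma^2$ typo), so there is nothing to fix.
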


\begin{proof}
By taking partial derivatives of $\tilde{P}_{\text{c}}(\text{Case I})$ with respect to $d_2$, we have
\begin{IEEEeqnarray}{rCl}
\frac{\partial \tilde{P}_{\text{c}}(\text{Case I})}{\partial d_2}&=&\frac{1}{2\sqrt{2\pi\sigma^2}}\left(\exp\left(\frac{-d_2^2}{8\sigma^2}\right)\right.\left.-(p_{01}+p_{10})\exp\left(\frac{-(d_{1, \max}-d_2)^2}{8\sigma^2}\right)\right)\nonumber
\label{FirstCond}
\end{IEEEeqnarray}
and
\begin{IEEEeqnarray}{l}
\frac{\partial^2 \tilde{P}_{\text{c}}(\text{Case I})}{\partial d^2_2}=\frac{1}{16\sigma^2\sqrt{2\pi}}\left(\exp\left(\frac{-d_2^2}{8\sigma^2}\right)(-2d_2)\right.\left. -(p_{01}+p_{10})\exp\left(\frac{-(d_{1, \max}-d_2)^2}{8\sigma^2}\right)2(d_{1, \max}-d_2)\right).\nonumber\\ \*
\label{SecondCond}
\end{IEEEeqnarray}
Since $d_{2, \max}\le d_{1, \max}$, the second derivative given in (\ref{SecondCond}) is non-positive for all $d_2\in (0, d_{2, \max}]$. 
Hence, $\tilde{P}_{\text{c}}(\text{Case I})$ is a concave function of $d_2$ on the interval $(0, d_{2, \max}]$. 
\end{proof}

Similarly, one can show that for any $d_1\in(0, d_{2, \max}]$, $P_{\text{c}}(\text{Case III})$ is increasing in $d_2$, where $d_1$ is upper bounded by $d_{2, \max}$ due to the condition $|d_1|<|d_2|$. 
Also, for $d_2=d_{2, \max}$, $\tilde{P}_\text{c}(\text{Case III})$ is a concave function in $d_1$ for $d_1\in(0, d_{2, \max}]$, 
Since the proofs of these statements are almost identical to the proof for Case~I, the details are omitted. 
Based on the previous lemmas, we are readily to prove Theorem~\ref{thm1}. 

\begin{proof}[Proof of Theorem \ref{thm1}]
First, we show that if $p_{00}+p_{11}\ge p_{10}+p_{01}$, then $\tilde{P}^*_{\text{c}}(\text{Case I})\ge \tilde{P}^*_{\text{c}}(\text{Case III})$, which implies that we only need to consider Case~I. By letting $\bar{d}_2=-d_2$, we can rewrite $\tilde{P}_{\text{c}}(\text{Case III})$ as
\begin{IEEEeqnarray}{lCl}
\tilde{P}_{\text{c}}(\text{Case III})&=& Q\left(\frac{-\bar{d}_2}{2\sigma}\right)-(p_{00}+p_{11})Q\left(\frac{d_1-\bar{d}_2}{2\sigma}\right)\nonumber \\
&=& 1-Q\left(\frac{\bar{d}_2}{2\sigma}\right)-(p_{00}+p_{11})Q\left(\frac{d_1-\bar{d}_2}{2\sigma}\right),\label{newPc3}
\end{IEEEeqnarray}
where $d_1\in(0, d_{1, \max}]$ and $\bar{d}_2\in(0, d_{2, \max}]$. 
Based on this expression and the correspondence $d_2\leftrightarrow \bar{d}_2$, the feasible set for $\tilde{P}_{\text{c}}(\text{Case III})$ is observed to be identical to that for $\tilde{P}_{\text{c}}(\text{Case I})$. Moreover, the new expression for $\tilde{P}_{\text{c}}(\text{Case III})$ only differs from $\tilde{P}_{\text{c}}(\text{Case I})$ in the coefficient of the third term. 
Therefore, when $p_{00}+p_{11}\ge p_{10}+p_{01}$, we have $\max_{d_1, d_2}\tilde{P}_{\text{c}}(\text{Case I})\ge \max_{d_1, \bar{d}_2}\tilde{P}_{\text{c}}(\text{Case III})$ and the optimized constellations for Case I should be selected. 
In contrast, when $p_{00}+p_{11}< p_{10}+p_{01}$, the optimized constellations for Case~III is chosen. 
Next, we explicitly derive the optimal constellations for Cases~I and III to complete the proof. 

To find the optimized constellation $\mathcal{S}_1$ that maximizes $\tilde{P}_{\text{c}}(\text{Case I})$, $d_1$ should be set to its maximum possible value $d_1=d_{1, \max}$ according to Lemma~\ref{property1}. 
By setting $\eta=0$ in Lemma~\ref{MAXSepConstellation}, this choice of $d_1$ immediately gives the optimal constellation for sender $1$:   
\begin{equation}
S_{10}=-\sqrt{\frac{1-p_1}{p_1}E_1},\ S_{11}=\sqrt{\frac{p_1}{1-p_1}E_1}.
\label{OptConstellation1}  
\end{equation}
Moreover, based on the concavity property in Lemma~\ref{property2}, the maximum of $\tilde{P}_{\text{c}}(\text{Case I})$ in the variable $d_2$ is known to occur at either where the partial derivative is zero or at the boundary of its support interval. Solving $\frac{{\partial \tilde{P}_{\text{c}}}(\text{Case I})}{{\partial d_2}}=0$ for $d_2$, we obtain
\begin{equation}
d_2=-\frac{4\sigma^2}{d_{1, \max}}\ln(p_{10}+p_{01})+\frac{d_{1, \max}}{2}.
\label{OptDistd2}
\end{equation}
By substituting $S_{21}-S_{20}=d_2$ into the average energy constraint $p_2|S_{20}|^2+(1-p_2)|S_{21}|^2=E_2$, the optimized $\mathcal{S}_2$ is obtained. 
In summary, when $d_2^2p_2(p_2-1)+E_2> 0$ with $d_2$ given in (\ref{OptDistd2}), there are two optimized constellations: 
\begin{equation}
S_{20} = S_{21}-d_2,\ S_{21}=d_2p_2+\sqrt{d_2^2p_2(p_2-1)+E_2}\nonumber
\label{OptConstellation2_1}
\end{equation}
and 
\begin{equation}
S_{20} = S_{21}-d_2,\ S_{21}=d_2p_2-\sqrt{d_2^2p_2(p_2-1)+E_2}. \nonumber
\label{OptConstellation2_11}
\end{equation}
When $d_2^2p_2(p_2-1)+E_2\le 0$, the optimized constellation for sender $2$ is
\begin{equation}
S_{20}=-\sqrt{\frac{1-p_2}{p_2}E_2},\ S_{21}=\sqrt{\frac{p_2}{1-p_2}E_2}\nonumber
\label{OptConstellation2_2}
\end{equation}
which follows from the result that the maximum of $\tilde{P}_\text{c}(\text{Case I})$ occurs at $d_2=d_{2, \max}$ and from Lemma~\ref{MAXSepConstellation}.

With the help of the expression in (\ref{newPc3}) and the above derivation, the optimal constellations for Case~III can be easily derived. 
For sender $1$, the optimized constellation $\mathcal{S}_1$ is the same as the one given in (\ref{OptConstellation1}) because the choice $d_1=d_{1, \max}$ also maximizes $\tilde{P}_{\text{c}}(\text{Case III})$. 
For sender $2$, solving $\frac{{\partial \tilde{P}_{\text{c}}}(\text{Case III})}{{\partial \bar{d}_2}}=0$ gives
\begin{equation}
\bar{d}_2=-\frac{4\sigma^2}{d_{1, \max}}\ln(p_{00}+p_{11})+\frac{d_{1, \max}}{2}. \nonumber
\label{OptDistd2CaseIII}
\end{equation}
By substituting $S_{20}-S_{21}=\bar{d}_2$ into $p_2|S_{20}|^2+(1-p_2)|S_{21}|^2=E_2$, there are two optimized constellations $\mathcal{S}_2$ in case of $\bar{d}_2^2p_2(p_2-1)+E_2> 0$ given by 
\begin{equation}
S_{20} = S_{21}+\bar{d}_2,\ S_{21}=\bar{d}_2p_2+\sqrt{\bar{d}_2^2p_2(p_2-1)+E_2}\nonumber
\label{OptConstellation2_12}
\end{equation}
and 
\begin{equation}
S_{20} = S_{21}+\bar{d}_2,\ S_{21}=\bar{d}_2p_2-\sqrt{\bar{d}_2^2p_2(p_2-1)+E_2}.\nonumber
\label{OptConstellation2_13}
\end{equation}
When $\bar{d}_2^2p_2(p_2-1)+E_2\le 0$, the optimized constellation $\mathcal{S}_2$ is given by 
\begin{IEEEeqnarray}{c}
S_{20}=\sqrt{\frac{1-p_2}{p_2}E_2},\ S_{21}=-\sqrt{\frac{p_2}{1-p_2}E_2}.\nonumber
\label{OptConstellation2_3}
\end{IEEEeqnarray}  
\end{proof}

\subsection{Design of Signal Constellations for $\gamma_{\phi}\notin\{0, 1, -1\}$}
When $\gamma_{\phi}\notin\{0, 1, -1\}$, the design procedure becomes more difficult since the performance of joint MAP decoding is usually not known in closed form even at high SNRs. 
Instead, we use the union bound to facilitate the design procedure. 
Specifically, a closed form upper bound for the error probability of MAP decoding is first obtained via the union bound. 
The optimized constellations are then derived analytically based on the minimization of this upper bound in the high SNR regime.   
Although this design approach is less accurate than optimizing the exact system error rate (or errror bounds that are tighter than the union bound \cite{Kuai10}), its effectiveness has been extensively demonstrated in, e.g., \cite{Foschini74}-\cite{Maleki14}. 

The union bound on the error rate of the joint MAP decoder is
\begin{equation}
P_{\text{err}}\le P_{\text{err}}^{(\text{UB})}=\sum\limits_{(u, v)}\sum\limits_{(l,m)\neq(u, v)}p_{uv}\Pr(\Delta_{uv, lm}> 0). 
\label{ERUB}
\end{equation}
In the high SNR regime, we further have that 
\begin{equation}
\lim_{\sigma^2\rightarrow 0}\frac{Q\left(\frac{|A_{lm}-A_{uv}|}{2\sigma}\right)}{\Pr(\Delta_{uv, lm}> 0)}=1\nonumber
\label{PEP}
\end{equation} 
for all $(u, v)\neq(l, m)$, and hence the right-hand-side of (\ref{ERUB}) can be approximated for $\sigma^2$ sufficiently small by $\tilde{P}_{\text{err}}^{\text{(UB)}}$ given by
\begin{IEEEeqnarray}{l}
\tilde{P}_{\text{err}}^{\text{(UB)}}=Q\left(\frac{|d_1|}{2\sigma}\right) + Q\left(\frac{|d_2|}{2\sigma}\right) + (p_{00}+p_{11})Q\left(\frac{\sqrt{|d_1|^2+|d_2|^2+2|d_1||d_2|\cos\psi}}{2\sigma}\right) \nonumber\\ 
\quad\quad\quad\quad\quad\quad\quad + (p_{01}+p_{10})Q\left(\frac{\sqrt{|d_1|^2+|d_2|^2-2|d_1||d_2|\cos\psi}}{2\sigma}\right),\IEEEeqnarraynumspace
\label{HIGHSNRUB}
\end{IEEEeqnarray}
where $d_1=S_{11}-S_{10}$ and $d_2=S_{21}-S_{20}$ are generally complex-valued and $\psi$ is the angle measured counterclockwise from $d_1$ to $d_2$ on the complex plane. 
We note that taking partial derivatives to minimize (\ref{HIGHSNRUB}) results in transcendental equations, so another method is proposed here. 
Our design procedure is to first find optimal $|d_1|$, $|d_2|$, and $\psi$ that minimize (\ref{HIGHSNRUB}), and then derive optimized signals based on the energy constraints. 
We first note that for given basic waveforms with $\gamma_{\phi}=\cos\theta$, where $\theta$ denotes the angle between the signal subspaces induced by $\phi_1(t)$ and $\phi_2(t)$ on the complex plane, due to symmetry it is sufficient to consider $\theta\in(0, \pi)$. 
Also, by definition, $\psi$ can take value in $\{\theta, \theta+\pi\}$.  

Next, we give an example to illustrate our design approach.  
Suppose that $\gamma_{\phi}>0$ and $(p_{00}+p_{11})\ge(p_{01}+p_{10})$.
In this case, we have $0<\theta<\pi/2$. 
To minimize (\ref{HIGHSNRUB}), due to the possible values of $\psi$, the symmetry of the arguments of the $Q$ function in the third and forth terms of (\ref{HIGHSNRUB}), and $(p_{00}+p_{11})\ge(p_{01}+p_{10})$, we first choose $\psi=\theta$. 
Also, since the $Q$ function is decreasing and 
\begin{IEEEeqnarray}{c}
\lim_{\sigma^2\rightarrow 0}\frac{c^*Q(d^*)}{\tilde{P}_{\text{err}}^{\text{(UB)}}}=1,
\label{Dominant}
\end{IEEEeqnarray}
where $d^*$ denotes the minimum value among the arguments of the $Q$ function in $\tilde{P}_{\text{err}}^{\text{(UB)}}$, and $c^*$ is the associated coefficient of that term, we next maximize 
\begin{equation}
\min(|d_1|^2, |d_2|^2, |d_1|^2+|d_1|^2-2|d_1||d_2|\cos\psi)
\label{HIGHSNRDMIN}
\end{equation}
over $|d_1|\in(0, d_{1, \max}]$, $|d_2|\in(0, d_{2, \max}]$, and $d_{2, \max}\le d_{1,\max}$. 
Note that due to $|d_1|^2+|d_1|^2+2|d_1||d_2|\cos\psi\allowbreak>|d_1|^2+|d_1|^2-2|d_1||d_2|\cos\psi$, the term $|d_1|^2+|d_1|^2+2|d_1||d_2|\cos\psi$ has been excluded in (\ref{HIGHSNRDMIN}). 

To find $|d_1|$ and $|d_2|$ that maximize (\ref{HIGHSNRDMIN}), we use a two-step procedure.  
Roughly speaking, for an arbitrary but fixed $|d_1|$, we first identify the candidates for an optimal $|d_2|$ in (\ref{HIGHSNRDMIN}). 
These are either constants or functions of $|d_1|$.  
Using these candidates, we re-examine (\ref{HIGHSNRDMIN}) to find the optimal $|d_1|$. 
A few pairs of $|d_1|$ and $|d_2|$ which possibly maximize (\ref{HIGHSNRDMIN}) are then formed for further evaluation. 
We summarize the obtained results in the next lemma; the details of the two-steps procedure are given in the proof. 

\begin{figure*}[!t]
\centering
\subfloat[]{\includegraphics[draft=false, scale=0.44]{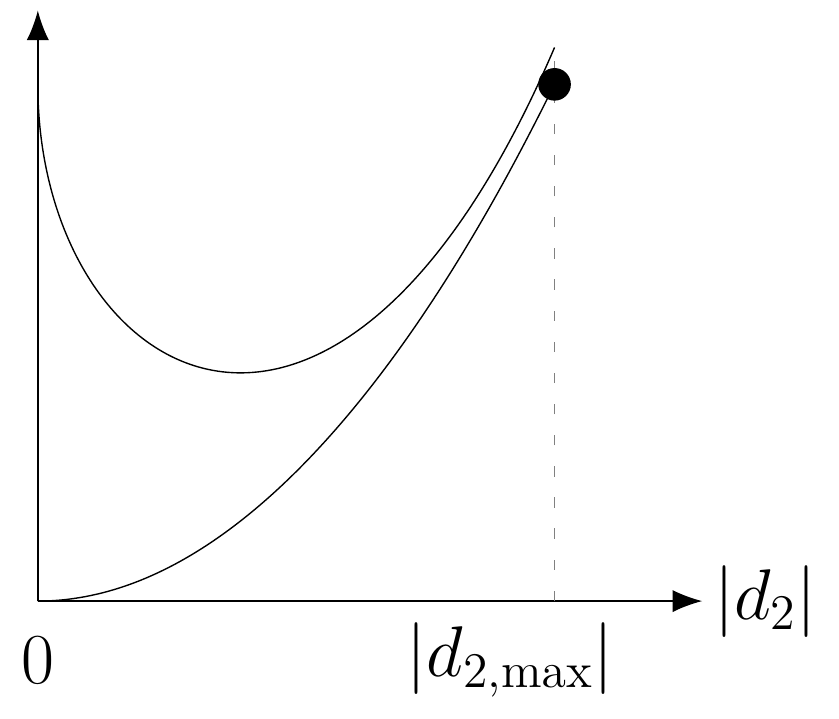}
\label{Ex4aFig}}\hspace{+0.5cm}
\subfloat[]{\includegraphics[draft=false, scale=0.44]{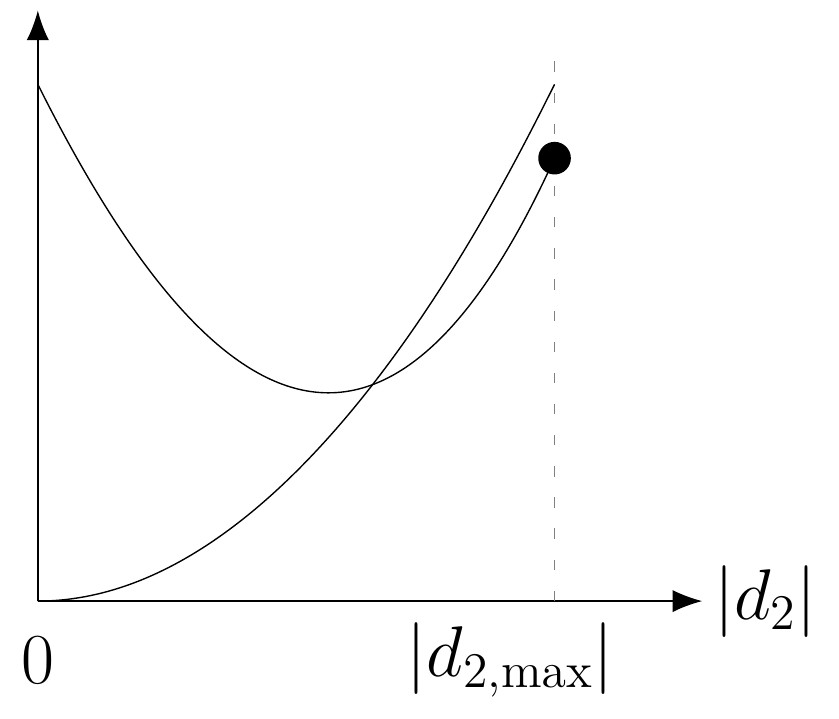}
\label{Ex4bFig}}\hspace{+0.5cm}
\subfloat[]{\includegraphics[draft=false, scale=0.44]{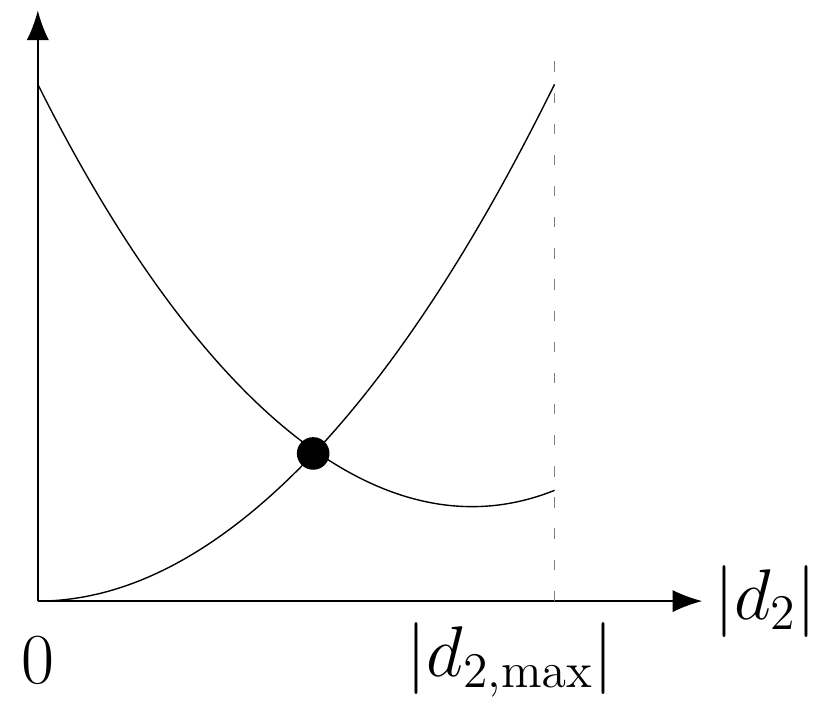}
\label{Ex4cFig}}
\caption{The possible locations of the maximum of $\min(|d_2|^2, |d_1|^2+|d_2|^2-2|d_1||d_2|\cos\psi)$, in which the maximum of (a) and (b) occurs at the boundary of support and the maximum of (c) happens at the intersection of the two curves.}
\label{Ex4}
\end{figure*}

\begin{lemma}
For any $0<\psi<\pi/2$, the optimal $|d_{1}|$ and $|d_{2}|$ that maximize (\ref{HIGHSNRDMIN}) are given by $|d_1|=d_{1, \max}$ and $|d_2|=d_{1, \max}/(2\cos\psi)$ if $d_{1, \max}^2+d_{2, \max}^2-2d_{1, \max}d_{2, \max}\cos\psi\le (d_{1, \max}/2\cos\psi)^2\le d_{2, \max}^2$, and $|d_2|=d_{2, \max}$ otherwise. 
\label{2DLemma} 
\end{lemma}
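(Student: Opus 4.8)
The plan is to set $x=|d_1|$, $y=|d_2|$, and $c=\cos\psi\in(0,1)$, and to write the objective (\ref{HIGHSNRDMIN}) as $g(x,y)=\min(x^2,\,y^2,\,T_3)$ with $T_3\triangleq x^2+y^2-2xyc$, to be maximized over the box $x\in(0,d_{1,\max}]$, $y\in(0,d_{2,\max}]$ subject to $d_{2,\max}\le d_{1,\max}$. The proof splits into two steps: first I would reduce the optimal $|d_1|$ to $d_{1,\max}$, which collapses $g$ to the two-term minimum shown in Fig.~\ref{Ex4}; then I would solve the resulting one-dimensional problem in $y$.

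For the reduction I would first note that $g$ is symmetric, $g(x,y)=g(y,x)$, since $T_3$ is symmetric in $x,y$. Restricting to the half $x\ge y$ gives $x^2\ge y^2$, so $g(x,y)=\min(y^2,T_3)$, and since $\partial T_3/\partial x=2(x-yc)>0$ whenever $x\ge y>yc$ (because $c<1$), $T_3$ is strictly increasing in $x$; hence $g(\cdot,y)$ is non-decreasing in $x$ on $[y,d_{1,\max}]$. Given any feasible $(x_0,y_0)$, if $x_0\ge y_0$ this monotonicity yields $g(x_0,y_0)\le g(d_{1,\max},y_0)$; if $x_0<y_0$, I would apply the same bound to the swapped point $(y_0,x_0)$, which is still feasible since $x_0<y_0\le d_{2,\max}\le d_{1,\max}$, and invoke symmetry to obtain $g(x_0,y_0)=g(y_0,x_0)\le g(d_{1,\max},x_0)$. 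Either way the maximum is attained with $|d_1|=d_{1,\max}$, and since then $y\le d_{2,\max}\le d_{1,\max}$ forces $y^2\le d_{1,\max}^2$, the term $x^2$ is inactive and $g(d_{1,\max},y)=\min(y^2,\,d_{1,\max}^2+y^2-2d_{1,\max}yc)$.

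Writing $A\triangleq d_{1,\max}$, the second step maximizes $\rho(y)\triangleq\min(y^2,\,A^2+y^2-2Ayc)$ over $(0,d_{2,\max}]$. The crossover $A^2+y^2-2Ayc=y^2$ occurs exactly at $y^{*}\triangleq A/(2c)=d_{1,\max}/(2\cos\psi)$, so $\rho(y)=y^2$ is increasing on $(0,y^{*}]$ and $\rho(y)=A^2+y^2-2Ayc$ on $[y^{*},\infty)$. The key observation is that the latter branch is convex in $y$, hence its maximum over any interval is attained at an endpoint; consequently $\max\rho$ is attained either at the crossover $y^{*}$, with value $Q\triangleq(d_{1,\max}/(2\cos\psi))^2$, or at the right endpoint $y=d_{2,\max}$, with value $P\triangleq d_{1,\max}^2+d_{2,\max}^2-2d_{1,\max}d_{2,\max}\cos\psi$, provided $y^{*}$ is feasible, i.e. $Q\le R\triangleq d_{2,\max}^2$. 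Comparing these candidates gives the dichotomy: when $y^{*}$ lies in the feasible interval ($Q\le R$) and does not lose to the endpoint ($P\le Q$), the optimizer is $y=y^{*}=d_{1,\max}/(2\cos\psi)$; in every other case --- either $y^{*}$ infeasible ($Q>R$) or dominated ($P>Q$) --- the increasing left branch and the convex right branch both push the maximum to $y=d_{2,\max}$. This is precisely the stated condition $P\le Q\le R$ versus its complement, and the corresponding signal points then follow from the energy constraint (\ref{PowerConst}) and Lemma~\ref{MAXSepConstellation} exactly as in the $\gamma_\phi=\pm1$ case.

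The hard part will be the reduction in Step~1: because $T_3$ is a convex, hence non-monotone, function of $|d_1|$, one cannot simply assert that enlarging $|d_1|$ increases $g$, so a naive argument fails whenever $T_3$ is the active term on its decreasing branch. The symmetry-plus-monotonicity device above is what circumvents this, since it confines attention to the region $|d_1|\ge|d_2|$ where $T_3$ is guaranteed increasing in $|d_1|$, and it uses $d_{2,\max}\le d_{1,\max}$ to keep the swapped point feasible. The remaining effort is bookkeeping: verifying that the endpoint-versus-crossover comparison in Step~2 aligns exactly with the three inequalities $P\le Q\le R$, including the boundary $Q=R$ at which $y^{*}=d_{2,\max}$ so that the two candidates coincide, which matches the three sub-cases depicted in Fig.~\ref{Ex4}.
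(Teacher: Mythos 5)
Your proof is correct and arrives at exactly the dichotomy of Lemma~\ref{2DLemma}, but it takes a genuinely different route in the key reduction step: you reverse the order of the two eliminations and justify $|d_1|=d_{1,\max}$ by a symmetry--exchange argument rather than by term-wise monotonicity. The paper first fixes $|d_1|$, argues geometrically (Fig.~\ref{Ex4}) that the optimal $|d_2|$ is either the endpoint $d_{2,\max}$ or the crossover $|d_1|/(2\cos\psi)$, then substitutes both candidates to rewrite the objective as the four-term minimum (\ref{HIGHSNRDMIN2}) and sets $|d_1|=d_{1,\max}$ on the grounds that the first three terms there are increasing in $|d_1|$. You instead dispose of $|d_1|$ first, using the symmetry $g(x,y)=g(y,x)$ of $\min\left(x^2,y^2,x^2+y^2-2xy\cos\psi\right)$ together with the fact that the cross term is increasing in $x$ on the half-region $x\ge y$, with the hypothesis $d_{2,\max}\le d_{1,\max}$ keeping the swapped point feasible; only then do you run the one-dimensional crossover-versus-endpoint comparison in $|d_2|$, which coincides with the paper's final comparison and yields the same condition $P\le Q\le R$. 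Your version buys rigor precisely where the paper is loose: the third term of (\ref{HIGHSNRDMIN2}) is a quadratic in $|d_1|$ with vertex at $d_{2,\max}\cos\psi$, hence not globally increasing, and collapsing the inner maximum over the two candidates into a single four-term minimum is, strictly speaking, only a lower bound on the inner-optimized value $\min\left(|d_1|^2,\max_{|d_2|}w(|d_2|)\right)$; your ordering (fix $x=d_{1,\max}$ first, then maximize over $y$, exploiting that $\min(x^2,\cdot)$ commutes with the inner maximization) sidesteps both issues, and your observation that a naive ``enlarge $|d_1|$'' argument fails on the decreasing branch of the cross term is exactly the pitfall the paper's wording glosses over. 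What the paper's route buys in exchange is that it matches the two-step procedure announced before the lemma and produces the explicit candidate list and geometric picture that the subsequent discussion and the proof of Theorem~\ref{thm2} lean on; your exchange argument is cleaner but delivers the candidates only after the reduction.
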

\begin{proof}
For the given $0<\psi<\pi/2$, we fix $|d_1|\in(0, d_{1, \max}]$ and define $w(|d_2|)=\min(|d_2|^2, |d_1|^2+|d_2|^2-2|d_1||d_2|\cos\psi)$. 
Note that $(\ref{HIGHSNRDMIN})$ can be now expressed as $\min(|d_1|^2, w(|d_2|))$. We first analyze $w(|d_2|)$. 
Since $|d_2|^2$ is increasing in $|d_2|$ and $|d_1|^2+|d_2|^2-2|d_1||d_2|\cos\psi$ is a quadratic function of $|d_2|$ on $(0, d_{2, \max}]$, the maximum value of $w(|d_2|)$ appears at either the interval boundary or at the intersection of the two curves. 
These cases are illustrated in Fig.\ \ref{Ex4}. 
For cases (a) and (b), the optimal $|d_2|$ is simply $d_{2, \max}$. 
For case (c), the optimal $|d_2|$ can easily be found by solving the equation $|d_1|^2+|d_2|^2-2|d_1||d_2|\cos\psi=|d_2|^2$ for $|d_2|$, which immediately gives $|d_2|=|d_1|/(2\cos\psi)$. Two possible candidates for optimal $|d_2|$ are then obtained. 
We note that the method of finding $|d_2|$ in the latter case does not take the domain of $|d_2|$ into account so that we have to check whether or not the obtained $|d_2|$ is on $(0, d_{2, \max}]$.  
If not, then this case degenerates to the former one and we have $|d_2|=d_{2, \max}$. 
With the two candidates for the optimal $|d_2|$, the function $\min(|d_1|^2, w(|d_2|))$ can be rewritten as 
\begin{IEEEeqnarray}{l}
\min\left(|d_1|^2, \frac{|d_1|^2}{(2\cos\psi)^2}, \right.|d_1|^2+d_{2, \max}^2-2|d_1|d_{2, \max}\cos\psi, d_{2, \max}^2\Bigg).\IEEEeqnarraynumspace 
\label{HIGHSNRDMIN2}
\end{IEEEeqnarray}
In (\ref{HIGHSNRDMIN2}), it is observed that the first three terms are increasing in $|d_1|$ and the fourth term is a constant so that we choose $|d_1|=d_{1, \max}$ to maximize the minimum value. 
Consequently, two possibly optimal pairs $(|d_1|, |d_2|)$ are formed. 
Moreover, to provide conditions for identifying the optimal pair, it suffices to find the case where the maximum value of (\ref{HIGHSNRDMIN2}) is achieved by $|d_1|=d_{1, \max}$ and $|d_2|=d_{1, \max}/(2\cos\psi)$. 
As illustrated in Fig. \subref*{Ex4cFig}, finding the condition is straightforward. The intersection is located between the boundary values of the two curves at $|d_2|=d_{2, \max}$, i.e.,  
\begin{IEEEeqnarray}{l}
\ d_{1, \max}^2+d_{2, \max}^2-2d_{1, \max}d_{2, \max}\cos\psi\le \left(\frac{d_{1, \max}}{2\cos\psi}\right)^2 \le d^2_{2, \max}.\nonumber
\end{IEEEeqnarray}
Combining the above observations, the lemma follows.  
\end{proof}

We are now ready to derive the optimized constellations for the case $\gamma_{\phi}>0$ and $(p_{00}+p_{11})\ge(p_{01}+p_{10})$. 
By the choice $|d_1|=d_{1, \max}$, $\mathcal{S}_1$ is the same as in (\ref{OptS1}). 
For sender $2$, if the best choice of $|d_2|$ is $d_{2, \max}$, then by Lemma {\ref{MAXSepConstellation} the optimized constellation $\mathcal{S}_2$ is given by
\begin{equation}
S_{20}=-\sqrt{\frac{(1-p_2)E_2}{p_2}}e^{i\eta}\ \text{and}\ S_{21}=\sqrt{\frac{p_2E_2}{1-p_2}}e^{i\eta}\nonumber
\end{equation}
with $\eta=\psi$. 
In contrast, if the best choice of $|d_2|$ is $d_{1,\max}/(2\cos\theta)$, the optimized constellation of sender $2$ can be obtained by solving the equation $|S_{21}-S_{20}|=|d_2|$ under the average energy constraint. 
After some algebra, the optimized constellation is found to be 
\begin{equation}
S_{20}=S_{21}-|d_2|e^{i\psi}\nonumber
\label{2DS2I}
\end{equation}
and
\begin{equation}
S_{21}= \left(p_2|d_2|\pm\sqrt{p_2(p_2-1)|d_2|^2+E_2}\right)e^{i\psi}\nonumber
\label{2DS2II}
\end{equation}
where both choices of $S_{21}$ result in optimized signals. 
The results for the general cases are summarized in the following theorem.  
We remark that the above derivations rely on the assumption that $d_{2, \max}\le d_{1, \max}$.
There is no loss of generality because for $d_{2, \max}> d_{1, \max}$ we only need to swap the roles of senders~1 and~2 in Lemma~9 and the main theorem. 
The proofs are almost identical as before. 

\begin{theorem}
Suppose that $d_{2, \max}\le d_{1, \max}$. For given $\gamma_{\phi}=\cos\theta$ with $\theta\in(0, \pi)$, the optimized constellation for sender $1$ is the same as the one given by (\ref{OptS1}) in Theorem~\ref{thm1}. For sender $2$, the optimized constellation is given by
\begin{equation}
S_{20}=S_{21}-|d_2|e^{i\psi}\nonumber
\end{equation}
and
\begin{equation}
S_{21}= \left(p_2|d_2|\pm\sqrt{p_2(p_2-1)|d_2|^2+E_2}\right)e^{i\psi}\nonumber
\end{equation}
where the value of $\psi$ for the cases $p_{00}+p_{11} \ge p_{10}+p_{01}$ and $p_{00}+p_{11} < p_{10}+p_{01}$ is respectively $\theta$ and $\theta+\pi$ and $|d_2|=d_{1, \max}/|2\cos\psi|$ if $d_{1, \max}^2+d_{2, \max}^2-2d_{1, \max}d_{2, \max}|\cos\psi|\le (d_{1, \max}/|2\cos\psi|)^2\le d_{2, \max}^2$, and $|d_2|=d_{2, \max}$ otherwise. 
\label{thm2}
\end{theorem}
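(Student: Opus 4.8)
The plan is to reduce the minimization of the high-SNR union bound $\tilde P_{\text{err}}^{(\text{UB})}$ of (\ref{HIGHSNRUB}), in the sense of (\ref{opt2}), to the distance optimization already settled in Lemma~\ref{2DLemma}, and then to convert the resulting optimal triple $(|d_1|,|d_2|,\psi)$ into actual signal points through the energy constraint (\ref{PowerConst}). This follows the same route as the worked case $\gamma_\phi>0$, $p_{00}+p_{11}\ge p_{10}+p_{01}$ that precedes the theorem; the two genuinely new obstructions are the selection of the orientation $\psi$ when $p_{00}+p_{11}<p_{10}+p_{01}$ and the verification that Lemma~\ref{2DLemma} applies for every admissible orientation.

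First I would invoke the dominant-term equivalence (\ref{Dominant}): as $\sigma^2\to 0$, $\tilde P_{\text{err}}^{(\text{UB})}$ is governed by $c^\ast Q(d^\ast/2\sigma)$, where $d^\ast$ is the smallest of the four arguments $|d_1|$, $|d_2|$, $|d_1+d_2|$, $|d_1-d_2|$ and $c^\ast$ the attached coefficient. Minimizing the bound therefore amounts to first maximizing $d^\ast$ and then, among configurations achieving the same $d^\ast$, minimizing $c^\ast$. The key observation for the orientation is that $|d_1\pm d_2|^2=|d_1|^2+|d_2|^2\pm 2|d_1||d_2|\cos\psi$, so the smaller combined distance equals $\sqrt{|d_1|^2+|d_2|^2-2|d_1||d_2|\,|\cos\psi|}$ and depends on $\psi$ only through $|\cos\psi|=|\gamma_\phi|$. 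Hence both admissible orientations $\psi\in\{\theta,\theta+\pi\}$ achieve exactly the same maximal $d^\ast$, and the choice between them is a pure tie-break on $c^\ast$: the dominant combined pair should carry the smaller of $p_{00}+p_{11}$ and $p_{10}+p_{01}$. Since $p_{00}+p_{11}$ multiplies $|d_1+d_2|$ and $p_{10}+p_{01}$ multiplies $|d_1-d_2|$, this forces the closest combined pair to have coefficient $\min(p_{00}+p_{11},\,p_{10}+p_{01})$, which, following the worked case $\gamma_\phi>0$, corresponds to the assignment $\psi=\theta$ and $\psi=\theta+\pi$ recorded in the theorem.

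With $\psi$ fixed, the maximization of $d^\ast$ is exactly the maximization of (\ref{HIGHSNRDMIN}) with $\cos\psi$ replaced by $|\cos\psi|$, so I would apply Lemma~\ref{2DLemma} verbatim, reading its acute angle as the one whose cosine is $|\cos\psi|$. This yields $|d_1|=d_{1,\max}$, whence sender~1's constellation is (\ref{OptS1}) by Lemma~\ref{MAXSepConstellation}, together with $|d_2|=d_{1,\max}/|2\cos\psi|$ under the displayed intersection condition and $|d_2|=d_{2,\max}$ otherwise. Finally I would recover sender~2's points: fixing the separation direction $e^{i\psi}$ and writing $S_{21}=s\,e^{i\psi}$, $S_{20}=S_{21}-|d_2|e^{i\psi}$ (legitimate because the error rate depends only on the relative geometry of the combined constellation, so sender~2 may be placed along this line through the origin), and substituting into the constraint (\ref{PowerConst}), $p_2|S_{20}|^2+(1-p_2)|S_{21}|^2=E_2$, gives the quadratic $s^2-2p_2|d_2|s+p_2|d_2|^2-E_2=0$, whose two roots are precisely $S_{21}=(p_2|d_2|\pm\sqrt{p_2(p_2-1)|d_2|^2+E_2})\,e^{i\psi}$. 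Both roots are admissible because $|d_2|\le d_{2,\max}$ keeps the discriminant nonnegative, and at $|d_2|=d_{2,\max}$ they collapse to the maximal-separation constellation of Lemma~\ref{MAXSepConstellation}.

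The main obstacle I anticipate is the orientation step. One must argue cleanly that flipping $\psi$ between $\theta$ and $\theta+\pi$ cannot improve the exponential decay rate---it only relabels which combined pair is closest---so that the decision genuinely collapses to minimizing the dominant coefficient $c^\ast$, and simultaneously that after this flip the configuration still falls within the scope of Lemma~\ref{2DLemma}, i.e.\ that the excluded larger-distance combined term $|d_1+d_2|$ (resp.\ $|d_1-d_2|$) never re-enters as the dominant distance. The remaining steps---applying Lemma~\ref{2DLemma} and solving the energy quadratic---are routine; the subtle bookkeeping throughout is keeping the sign of $\cos\psi$ and the coefficient assignment consistent across all four combinations of $\mathrm{sgn}(\gamma_\phi)$ and $\mathrm{sgn}(p_{00}+p_{11}-p_{10}-p_{01})$, under the standing reduction $d_{2,\max}\le d_{1,\max}$.
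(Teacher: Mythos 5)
Your proposal is correct and follows essentially the same route as the paper's proof: reduce via the dominant-term limit (\ref{Dominant}) to maximizing the minimum distance (\ref{HIGHSNRDMIN}), pick $\psi\in\{\theta,\theta+\pi\}$ so that the closer combined pair carries the smaller of $p_{00}+p_{11}$ and $p_{10}+p_{01}$, apply Lemma~\ref{2DLemma} with $\lvert\cos\psi\rvert$ (the paper does this through the substitution $\bar{\psi}=\psi-\pi$), and recover $S_{20}, S_{21}$ from the energy-constraint quadratic. The only cosmetic difference is that you justify the orientation choice by an asymptotic tie-break on the dominant coefficient $c^{*}$, whereas the paper invokes the term-swap symmetry of the third and fourth terms of (\ref{HIGHSNRUB}); both rest on the same observation, and your anticipated obstacle (the larger combined distance re-entering as dominant) is ruled out automatically since $|d_1|^2+|d_2|^2+2|d_1||d_2|\lvert\cos\psi\rvert\ge|d_1|^2+|d_2|^2-2|d_1||d_2|\lvert\cos\psi\rvert$.
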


\begin{proof}[Proof of Theorem \ref{thm2}]
For the case $\gamma_{\phi}>0$ and $(p_{00}+p_{11})\ge(p_{01}+p_{10})$, the optimized constellations have been derived above. 
The same procedure can be applied to other cases. To shorten the proof, we only point out some key points when applying the same procedure to other cases. 
Taking the case that $\gamma_{\phi}>0$ and $(p_{00}+p_{11})<(p_{01}+p_{10})$ as an example, here, $\psi$ should set to be $\theta+\pi$ to minimize (\ref{HIGHSNRUB}). 
By this choice of $\psi$ and (\ref{Dominant}), we should maximize
\begin{equation}
\min(|d_1|^2, |d_2|^2, |d_1|^2+|d_1|^2+2|d_1||d_2|\cos\psi).\nonumber
\label{HIGHSNRDMIN1}
\end{equation}
By defining $\bar{\psi}=\psi-\pi$, we can rewrite $\cos\psi$ as $-\cos\bar{\psi}$ and obtain an expression identical to (\ref{HIGHSNRDMIN}). 
From that expression, based on our previous argument, the optimal pair $(|d_1|, |d_2|)$ and the optimized constellations can be easily obtained. Extending to the case of $\gamma_{\phi}<0$ is straightforward and hence we omit the details. 
\end{proof}

\begin{table*}[t!]
\centering
\caption{The BPAM constellations for sources with $\underline{p}_{UV,\text{Case}1}$, where $\gamma_{\phi}=1$, $E_1=E_2=1$, and SNR $=18$dB.}
\label{1DExample}
\begin{tabular}{l|l|l|l|}
\cline{2-4}
 & $[S_{10}, S_{11}]$ & $[S_{20}, S_{21}]$ & $[A_{00}, A_{01}, A_{10}, A_{11}]$ \\ \hline
\multicolumn{1}{|l|}{Conventional Antipodal BPAM}           & $[-1, 1]$          & $[-1, 1]$          & $[-2, 0, 0, 2]$                    \\ \hline
\multicolumn{1}{|l|}{Individually Optimized BPAM} & $[-3, 1/3]$        & $[-3, 1/3]$        & $[-6, -8/3, -8/3, 2/3]$            \\ \hline
\multicolumn{1}{|l|}{Jointly Optimized BPAM} & $[-3, 1/3]$        & $[-2.421, -0.678]$ & $[-5.421, -3.678, -2.088, -0.345]$ \\ \hline
\multicolumn{1}{|l|}{Numerically Optimized BPAM}  & $[-3, 1/3]$        & $[-2.401, -0.686]$ & $[-5.401, -3.686, -2.068, -0.353]$ \\ \hline
\end{tabular}
\end{table*}

\begin{table*}[ht!]
\centering
\caption{The BPAM constellations for sources with $\underline{p}_{UV,\text{Case}2}$, where $\gamma_{\phi}=1$, $E_1=E_2=1$, and SNR $=18$dB.}
\label{1DExample2}
\begin{tabular}{l|l|l|l|}
\cline{2-4}
 & $[S_{10}, S_{11}]$ & $[S_{20}, S_{21}]$ & $[A_{00}, A_{01}, A_{10}, A_{11}]$ \\ \hline
\multicolumn{1}{|l|}{Conventional Antipodal BPAM}           & $[-1, 1]$          & $[-1, 1]$          & $[-2, 0, 0, 2]$                    \\ \hline
\multicolumn{1}{|l|}{Individually Optimized BPAM} & $[-2, 0.5]$        & $[-1, 1]$        & $[-3, -1, -0.5, 1.5]$            \\ \hline
\multicolumn{1}{|l|}{Jointly Optimized BPAM} & $[-2, 0.5]$        & $[-1.408, -0.131]$ & $[-3.408, -2.131, -0.908, 0.369]$ \\ \hline
\multicolumn{1}{|l|}{Numerically Optimized BPAM}  & $[-2, 0.5]$        & $[-1.406, -0.151]$ & $[-3.406, -2.151, -0.906, 0.349]$ \\ \hline
\end{tabular}
\end{table*}

We close this section with some observations.

\vspace{+0.15cm}\noindent{\it Observation 1:} The optimized constellation for $\gamma_{\phi}=\pm 1$ can possibly be constructed using (\ref{opt2}) instead of (\ref{opt0}). 
For example, letting $\theta=0$ in Theorem \ref{thm2}, we observe that the optimized constellation based on (\ref{opt2}) is identical to that given in Theorem~\ref{thm1} for high SNRs. This result is expected because $\lim\limits_{\sigma^2\rightarrow 0}P_{\text{err}}(\mathcal{S}_1, \mathcal{S}_2, \sigma^2)/P^{(\text{UB})}_{\text{err}}(\mathcal{S}_1, \mathcal{S}_2, \sigma^2)=1$. However, since the error rate approximation $1-\allowbreak\tilde{P}_{\text{c}}(\mathcal{S}_1, \mathcal{S}_2, \sigma^2)$ is tighter than $P^{(\text{UB})}_{\text{err}}(\mathcal{S}_1, \mathcal{S}_2, \sigma^2)$ at high SNR, the optimized constellation given in Theorem~\ref{thm1} is better than that obtained from Theorem~\ref{thm2}. 

\vspace{+0.15cm}\noindent{\it Observation 2:} Unlike the $4$-PAM constellation designed for single sender AWGN channels with a non-uniformly distributed source, the signal points at the boundary position of the combined constellation may not have the highest probability. 
This is because the optimized constellation is not only designed to combat channel noise but also to mitigate user interference. 

\vspace{+0.15cm}\noindent{\it Observation 3:} When the two sources $U$ and $V$ are uniformly distributed, the optimal combined constellation for $\gamma_{\phi}=1$ depends on the SNR and their signals are not equally spaced. 
This is related to the fact that the conventional uniform $4$-PAM constellation for single sender AWGN channels is not optimal under the average energy constraint \cite{Mako06}. However, when the SNR increases, the combined constellation signals become equally spaced. 

\vspace{+0.15cm}\noindent{\it Observation 4:} When $E_1=E_2$ and the marginal probability distributions of the sources are very biased $(p_1 \ll p_2)$, the jointly optimized constellations are identical to the individually optimized constellations. 
This indicates that if $S_{10}$ and $S_{11}$ are separated by a large distance, the interference from sender $2$ becomes negligible. 
For example, suppose that $S_{10}$ and $S_{11}$ can be separated by $d_{1, \max}$ and $d_{2, \max}\ll d_{1, \max}$. 
In this case, the system's error rate is expected to be dominated by the distance $|d_2|$ between the two signal points of sender $2$. Hence, $S_{20}$ and $S_{21}$ should be separated by the largest possible distance $|d_2|=d_{2, \max}$ to lower the error rate, which results in the same constellation optimized only for $p_2$. 

\section{Simulation Results}
In this section, we evaluate the effectiveness of our constellation designs via simulations. 
We let $E_1=E_2=1$, and the SNR is defined as $(E_1+E_2)/N_0=2/N_0$, where $N_0=\sigma^2$ if $\gamma_{\phi}=\pm 1$ and $N_0=2\sigma^2$ otherwise. 
For performance comparison, the conventional antipodal BPAM is considered. 
For $\gamma_{\phi}= 1$, these BPAM signals simply correspond to the signal points $S_{10}=S_{20}=1$ and $S_{11}=S_{21}=-1$ on the complex plane.
The optimal constellation designed for a single sender AWGN channel with a non-uniform binary source is also included \cite{Korn03}. 
As before, we call such a constellation individually optimized. 
The decoding performance of numerically optimized constellations, i.e., constellations that minimize the exact system error rate and are obtained by exhaustive search, is also provided as a reference

We only present simulation results for $\gamma_{\phi}\neq 0$ because the constellation design for orthogonal transmission $(\gamma_{\phi}=0)$ was tackled in \cite{Tyson15}. 
The joint source distributions we consider are $\underline{p}_{UV, \text{Case}1}\triangleq [p_{00}, p_{01}, p_{10}, p_{11}]=[0.091, 0.009, 0.009, 0.891]$ with $p_1=p_2=0.1$ and $\underline{p}_{UV, \text{Case}2}=[0.18, 0.02, 0.32, 0.48]$ with $p_1=0.2$ and $p_2=0.5$.
Here, the first source distribution has a stronger correlation than the second. 
The individually optimized constellations are simply generated by Lemma~\ref{MAXSepConstellation}, while the jointly optimized constellations for $\gamma_{\phi}=\pm 1$ and $\gamma_{\phi}\neq 0, \pm 1$ are constructed using Theorems~\ref{thm1} and \ref{thm2}, respectively. 

\subsection{Results for $\gamma_{\phi}=\pm 1$}
We consider the case $\gamma_{\phi}=1$, which results in the strongest user interference. 
For the joint distribution $\underline{p}_{UV, \text{Case}1}$, the signal points for various BPAMs at SNR $= 18$dB are listed in Table.~\ref{1DExample}, and their decoding performance is shown in Fig.~\ref{1D090101}. 
From the simulation results, the conventional antipodal BPAM is observed to exhibit poor decoding performance. 
This is partly due to the fact that identical BPAM at both sender introduces an ambiguity for the transmitted signals, i.e., $A_{01}=A_{10}=0$ (recall that $A_{uv}=S_{1u}+S_{2v}$). The same is true for the individually optimized BPAM because of the identical marginal distributions. 
However, since the individually optimized constellation achieves a larger average separation distance between the combined signals, its decoding performance can be slightly better than the conventional BPAM. This result indicates that not all non-bijective combined constellation are equally bad. 
In contrast, the jointly optimized constellation derived from our analysis is shown to provide significant improvement. 
Compared with the performance of the numerically optimized constellations, the difference is negligible.  

For $\gamma_{\phi}=1$ and $\underline{p}_{UV, \text{Case}2}$, the constellations designed for SNR $=18$dB are listed in Table~\ref{1DExample2}. 
As already noted, using the conventional antipodal BPAM for $\gamma_{\phi}=1$ unavoidably leads to an ambiguity for the transmitted signals and is also sub-optimal for non-uniform message sources. 
Nevertheless, due to the distinct marginal distribution for $U$ and $V$, the individually optimized BPAM now has different constellations at the two transmitters, thereby resulting in some performance improvement.  
Our designed constellation still significantly outperforms the individually optimized BPAM. 
The decoding performance of our designed constellation and the numerically optimized constellation are also nearly identical. 

\subsection{Results for $\gamma_{\phi}\notin\{0, 1, -1\}$}
We next present simulation results for $\gamma_{\phi}=0.924$, in which the user interference due the non-orthogonal transmission is less harmful than in the previous examples. 
The error rate graphs corresponding to the joint probability distributions $\underline{p}_{UV, \text{Case}1}$ and $\underline{p}_{UV, \text{Case}2}$ are plotted in Figs. \ref{2D090101} and \ref{2D040205}, respectively. 
From the simulation results, the conventional antipodal BPAM is found to be adequate because the ambiguity has been resolved. 
However, due to the non-uniform distribution of the sources, the individually optimized BPAM is still better than the conventional BPAM. 
Moreover, for both source distributions, our BPAM designs provide nearly optimal performance in the high SNR region. 
We note from Figs.\ \ref{2D090101} and \ref{2D040205} that at an error rate of $10^{-5}$, our joint designs achieve about a $1$dB and $2$dB SNR gain, respectively, over the individually optimized design.  
The minor performance degradation observed in the low SNR region is in fact the drawback of using the union bound in signal design. 

\subsection{Other Results}
In Fig.\ \ref{2DVarpulseEx1}, we depict the decoding performance under various correlation values between the senders' basic pulse waveforms for the joint probability distribution $\underline{p}_{UV, \text{Case}2}$. 
The values $\gamma_{\phi}=0, 0.383, 0.707, 0.924$ and $1$ correspond to the angles $\theta=\pi/2, 3\pi/8, \pi/4, \pi/8$ and $0$, respectively. 
As expected, the smaller $\gamma_{\phi}$, the better the decoding performance of the conventional BPAM and individually optimized BPAM. 
We observe that the orthogonal waveform transmission provides the best performance as intuitively expected. 
Moreover, it is worth mentioning that our design for $\gamma_{\phi}=1$ is better than both the conventional and individually optimized BPAM for $\gamma_{\phi}=0.924$ in the high SNR regime. This observation has practical importance because if orthogonal waveform transmission is unavailable for resource-limited networks, we may simply employ identical waveforms, i.e., $\gamma_{\phi}=1$, with our proposed constellations to achieve a good decoding performance. 
This way, the receiver only needs one matched filter for processing received signals. 

Lastly, we present an example in which the two senders transmit their signals with different average energy $E_1=2E_2$. 
Here, the case $\underline{p}_{UV, \text{Case}1}$ and $\gamma_{\phi}=1$ is considered, and the simulation results are depicted in Fig.\ \ref{1DUP}.
Clearly, due to the unequal energy allocation, the signal sets of both conventional antipodal BPAM and the individually optimized BPAM for the two senders are distinct, thereby yielding a better error rate performance than that of equal energy allocation (see Fig.~\ref{1D090101}).  
At an error rate of $10^{-5}$, the jointly optimized constellation achieves about $3$dB gain over the individually optimized design. 
Also, there is about $1$dB SNR gain from the unequal energy allocation for our design (compare Figs.~\ref{1D090101} and \ref{1DUP}). 
This example demonstrates that combining the energy allocation scheme \cite{JH13} with our design for two-sender GMAC can further improve the decoding performance. 

\section{Conclusions}
In this work, we investigated the design of optimized binary signaling schemes for sending correlated binary sources over non-orthogonal GMAC. 
For a wide range of SNRs and correlated source distributions, the error rate performance of the analytically derived signaling schemes was found to be quite close to the optimal performance under joint MAP decoding. 
In our experiments, the SNR gain achieved by our schemes is at least $2$dB over the individually optimized design for an error rate of $10^{-5}$ and strong interference.   
Future research directions include optimal energy allocation for different senders, nonbinary signaling, signaling design for the GMAC with more than two senders, fading channels, and constellation design for coded transmission. 

\appendix \label{AppendixA}
We consider the case $D_{10, 00}>0$, $D_{01, 00}>0$, and $|D_{10, 00}|-|D_{01, 00}|>0$, and evaluate (\ref{Pb1D}) explicitly. 
For other cases, the same procedure applies. 
Recall that $d_1=S_{11} - S_{10}$ and $d_2 = S_{21}-S_{20}$. 
Note that for the considered case we have $S_{10}<S_{11}$, $S_{20}<S_{21}$, and $S_{21}-S_{20}<S_{11}-S_{10}$. 
Based on these conditions, each of the four events in (\ref{MAPPb0}) is investigated as follows. 
Define $\mathcal{E}_1\triangleq\{\Delta_{00, 10}<0\}$, $\mathcal{E}_2\triangleq\{\Delta_{00, 01}<0\}$, and $\mathcal{E}_3\triangleq\{\Delta_{00, 11}<0\}$

\begin{itemize}[\IEEEsetlabelwidth{case\ }]
\item[Case 1.] When $(U, V)=(0, 0)$, the ranges of $N$ specified by $\mathcal{E}_1$, $\mathcal{E}_2$, and $\mathcal{E}_3$ are respectively given by
\begin{IEEEeqnarray}{lCl}
\text{Re}[N]&<& \frac{1}{d_1}\cdot\left[\sigma^2\ln\frac{p_{00}}{p_{10}} + \frac{d_1^2}{2}\right]\triangleq q_{11}\nonumber\\
\text{Re}[N]&<& \frac{1}{d_2}\cdot\left[\sigma^2\ln\frac{p_{00}}{p_{01}} + \frac{d_2^2}{2}\right]\triangleq q_{12}\nonumber\\
\text{Re}[N]&<& \frac{1}{(d_1+d_2)}\cdot\left[\sigma^2\ln\frac{p_{00}}{p_{11}} + \frac{(d_1+d_2)^2}{2}\right]\triangleq q_{13},\nonumber
\end{IEEEeqnarray}
Since $\text{Re}[N]$ is a zero mean Gaussian random variable with variance $\sigma^2$, we then have 
\begin{IEEEeqnarray}{rCl}
P_{\text{c}, 00}&=&\Pr(\text{Re}[N]<\min(q_{11}, q_{12}, q_{13}))\nonumber\\
&=&Q\left(\frac{-\min(q_{11}, q_{12}, q_{13})}{\sigma}\right).\nonumber
\end{IEEEeqnarray}

\item[Case 2.] When $(U, V)=(1, 0)$, the constraints on $N$ specified by $\mathcal{E}_1$, $\mathcal{E}_2$, and $\mathcal{E}_3$ are respectively given by
\begin{IEEEeqnarray}{rCl}
\text{Re}[N]&>& \frac{-1}{d_1}\cdot\left[\sigma^2\ln\frac{p_{10}}{p_{00}} + \frac{d_1^2}{2}\right]\triangleq q_{21}\nonumber\\
\text{Re}[N]&<& \frac{1}{d_2}\cdot\left[\sigma^2\ln\frac{p_{10}}{p_{11}} + \frac{d_2^2}{2}\right]\triangleq q_{22}\nonumber\\
\text{Re}[N]&>& \frac{1}{d_2-d_1}\cdot\left[\sigma^2\ln\frac{p_{10}}{p_{01}} + \frac{(d_2-d_1)^2}{2}\right]\triangleq q_{23}. \nonumber
\end{IEEEeqnarray}
According to the values of $q_{21}$, $q_{22}$, and $q_{23}$, we have
\[
P_{\text{c}, 10}=\left\{
\begin{array}{ll}
Q\left(\frac{q_{21}}{\sigma}\right)-Q\left(\frac{q_{22}}{\sigma}\right), &\text{if}\ q_{23}<q_{21}<q_{22}\\
Q\left(\frac{q_{23}}{\sigma}\right)-Q\left(\frac{q_{22}}{\sigma}\right), &\text{if}\ q_{21}<q_{23}<q_{22}\\
0, &\text{otherwise.}
\end{array}
\right.\]

\item[Case 3.] When $(U, V)=(0, 1)$, we obtain 
\begin{IEEEeqnarray}{rCl}
\text{Re}[N]&>& \frac{-1}{d_2}\cdot\left[\sigma^2\ln\frac{p_{01}}{p_{00}} + \frac{d_2^2}{2}\right]\triangleq q_{31}\nonumber\\
\text{Re}[N]&<& \frac{1}{d_1}\cdot\left[\sigma^2\ln\frac{p_{01}}{p_{11}} + \frac{d_1^2}{2}\right]\triangleq q_{32}\nonumber\\
\text{Re}[N]&<& \frac{1}{d_1-d_2}\cdot\left[\sigma^2\ln\frac{p_{01}}{p_{10}} + \frac{(d_1-d_2)^2}{2}\right]\triangleq q_{33}.\nonumber
\end{IEEEeqnarray}
Therefore, 
\[
P_{\text{c}, 01}=\left\{
\begin{array}{ll}
Q\left(\frac{q_{31}}{\sigma}\right)-Q\left(\frac{q_{33}}{\sigma}\right), &\text{if}\ q_{31}<q_{33}<q_{32}\\
Q\left(\frac{q_{31}}{\sigma}\right)-Q\left(\frac{q_{32}}{\sigma}\right), &\text{if}\ q_{31}<q_{32}<q_{33}\\
0, &\text{otherwise.}
\end{array}
\right.\]

\item[Case 4.] When $(U, V)=(1, 1)$, we have 
\begin{IEEEeqnarray}{rCl}
\text{Re}[N]&>& \frac{-1}{d_1}\cdot\left[\sigma^2\ln\frac{p_{11}}{p_{01}} + \frac{d_1^2}{2}\right]\triangleq q_{41}\nonumber\\
\text{Re}[N]&>& \frac{-1}{d_2}\cdot\left[\sigma^2\ln\frac{p_{11}}{p_{10}} + \frac{d_2^2}{2}\right]\triangleq q_{42}\nonumber\\
\text{Re}[N]&>& \frac{-1}{d_1+d_2}\cdot\left[\sigma^2\ln\frac{p_{11}}{p_{00}} + \frac{(d_1+d_2)^2}{2}\right]\triangleq q_{43}.\nonumber
\end{IEEEeqnarray}
Consequently, we have
\begin{IEEEeqnarray}{rCl}
P_{\text{c}, 11}&=&\Pr(\text{Re}[N]>\max(q_{41}, q_{42}, q_{43}))\nonumber\\
&=&Q\left(\frac{\max(q_{41}, q_{42}, q_{43})}{\sigma}\right).\nonumber
\end{IEEEeqnarray}
\end{itemize}

Using the above results, the decoding performance for one-dimensional combined constellation can be obtained via (\ref{MAPPb0}).

\begin{figure}[!tb]
\centering
\includegraphics[draft=false, scale=0.6]{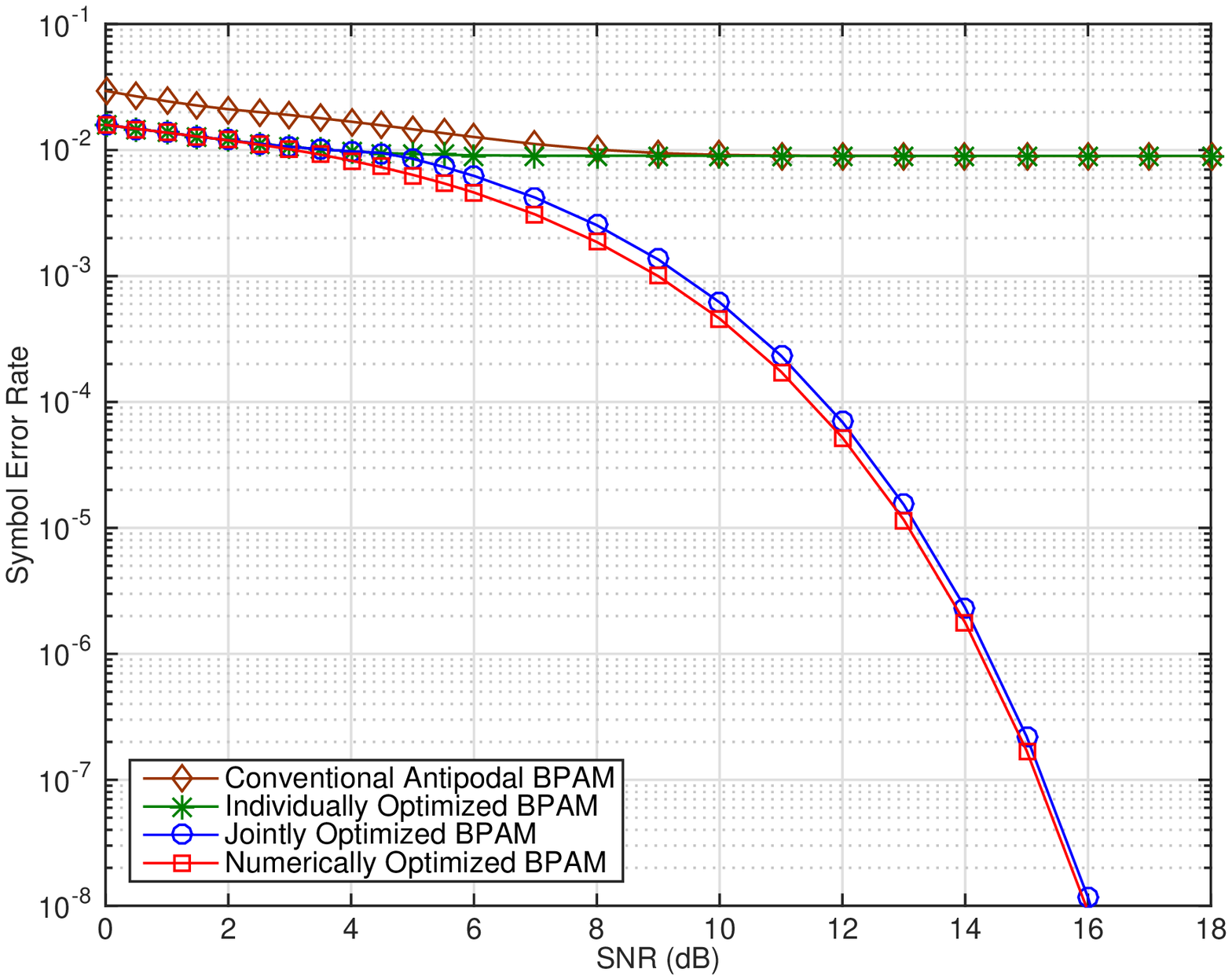} 
\caption{The decoding performance of various BPAM for correlated sources with $\underline{p}_{UV, \text{Case}1}$, where $\gamma_{\phi}=1$ and $E_1=E_2$.}
\label{1D090101}
\end{figure}

\begin{figure}[!thb]
\centering
\includegraphics[draft=false, scale=0.6]{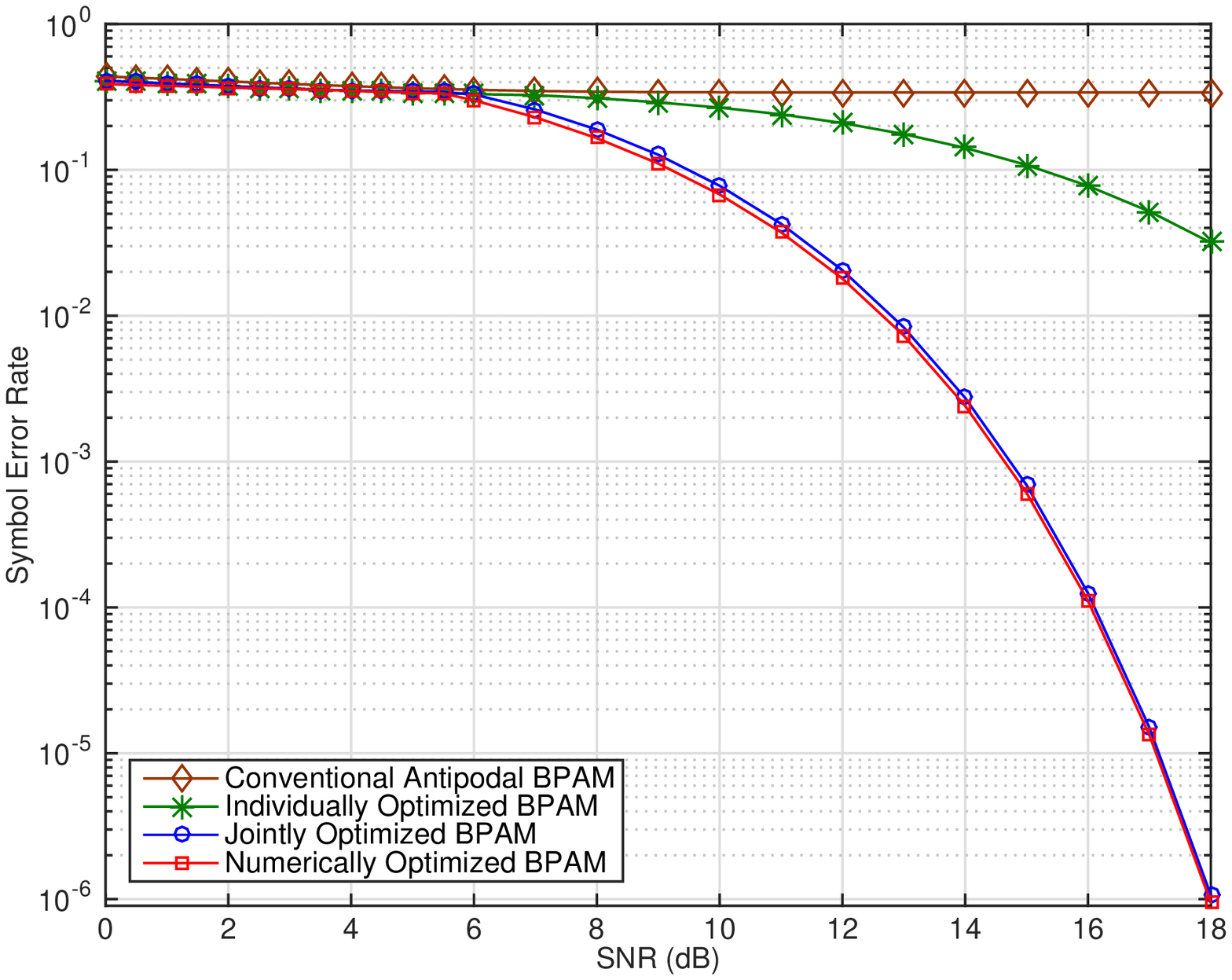} 
\caption{The decoding performance of various BPAM for correlated sources with $\underline{p}_{UV, \text{Case}2}$, where $\gamma_{\phi}=1$ and $E_1=E_2$.}
\label{1D040205}
\end{figure}

\begin{figure}[!tb]
\centering
\includegraphics[draft=false, scale=0.6]{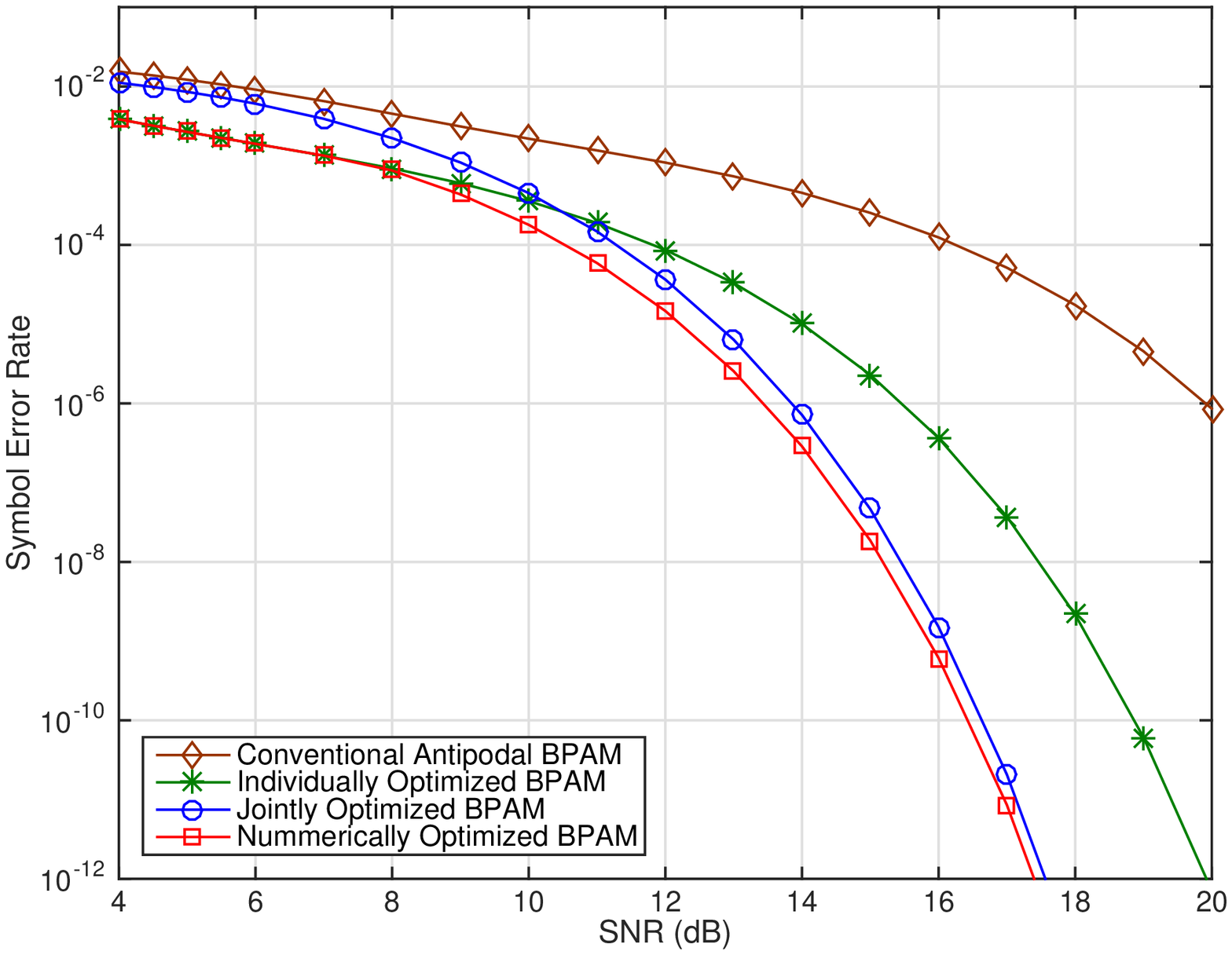} 
\caption{The decoding performance of various BPAM for $\underline{p}_{UV, \text{Case}1}$, where $\gamma_{\phi}=0.924$ and $E_1=E_2$.}
\label{2D090101}
\end{figure}

\begin{figure}[!thb]
\centering
\includegraphics[draft=false, scale=0.6]{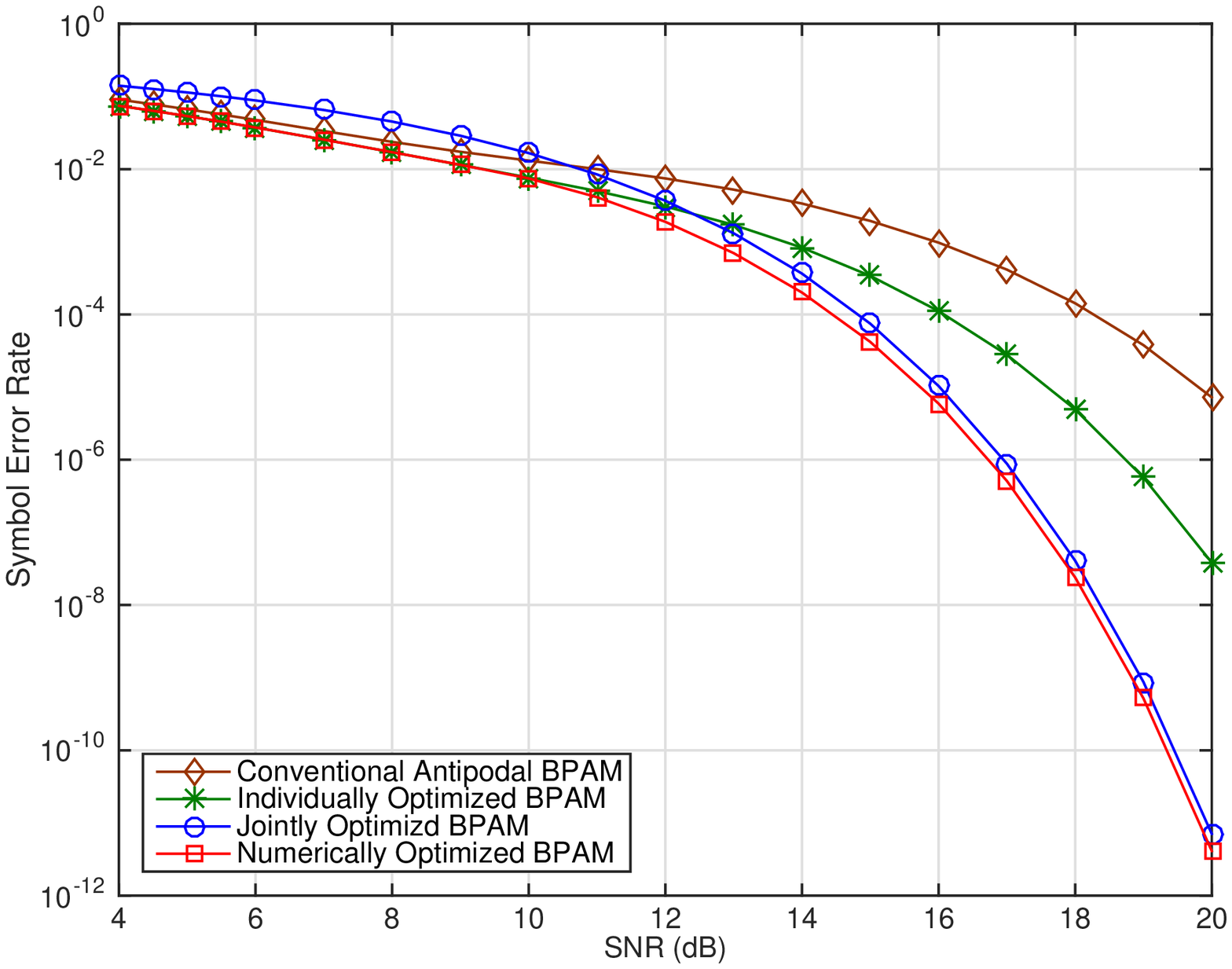} 
\caption{The decoding performance of various BPAM for $\underline{p}_{UV, \text{Case}2}$, where $\gamma_{\phi}=0.924$ and $E_1=E_2$.}
\label{2D040205}
\end{figure}

\begin{figure}[!tb]
\centering
\includegraphics[draft=false, scale=0.6]{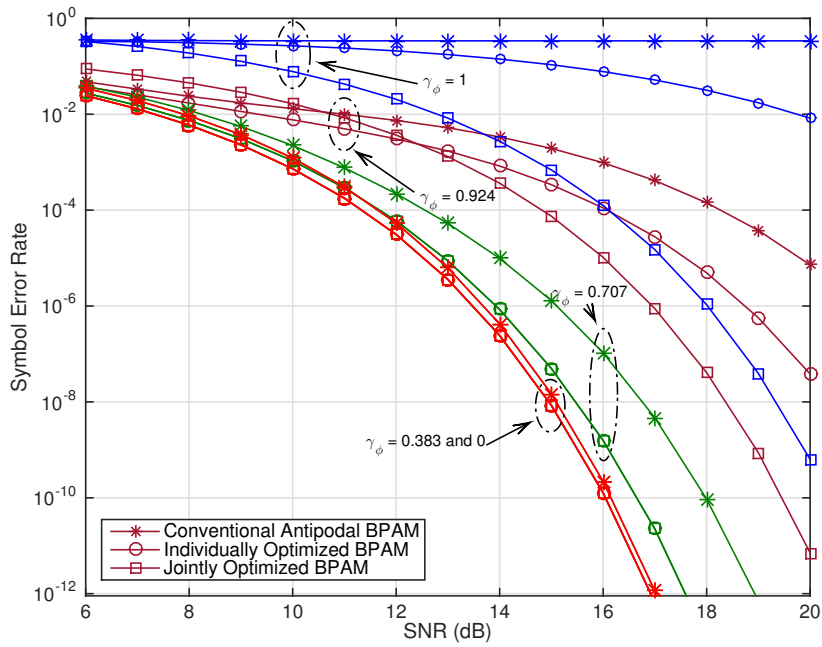} 
\caption{The decoding performance of various BPAM and various $\gamma_{\phi}$ for $\underline{p}_{UV, \text{Case}2}$, where $E_1=E_2$.}
\label{2DVarpulseEx1}
\end{figure}

\begin{figure}[!thb]
\centering
\includegraphics[draft=false, scale=0.6]{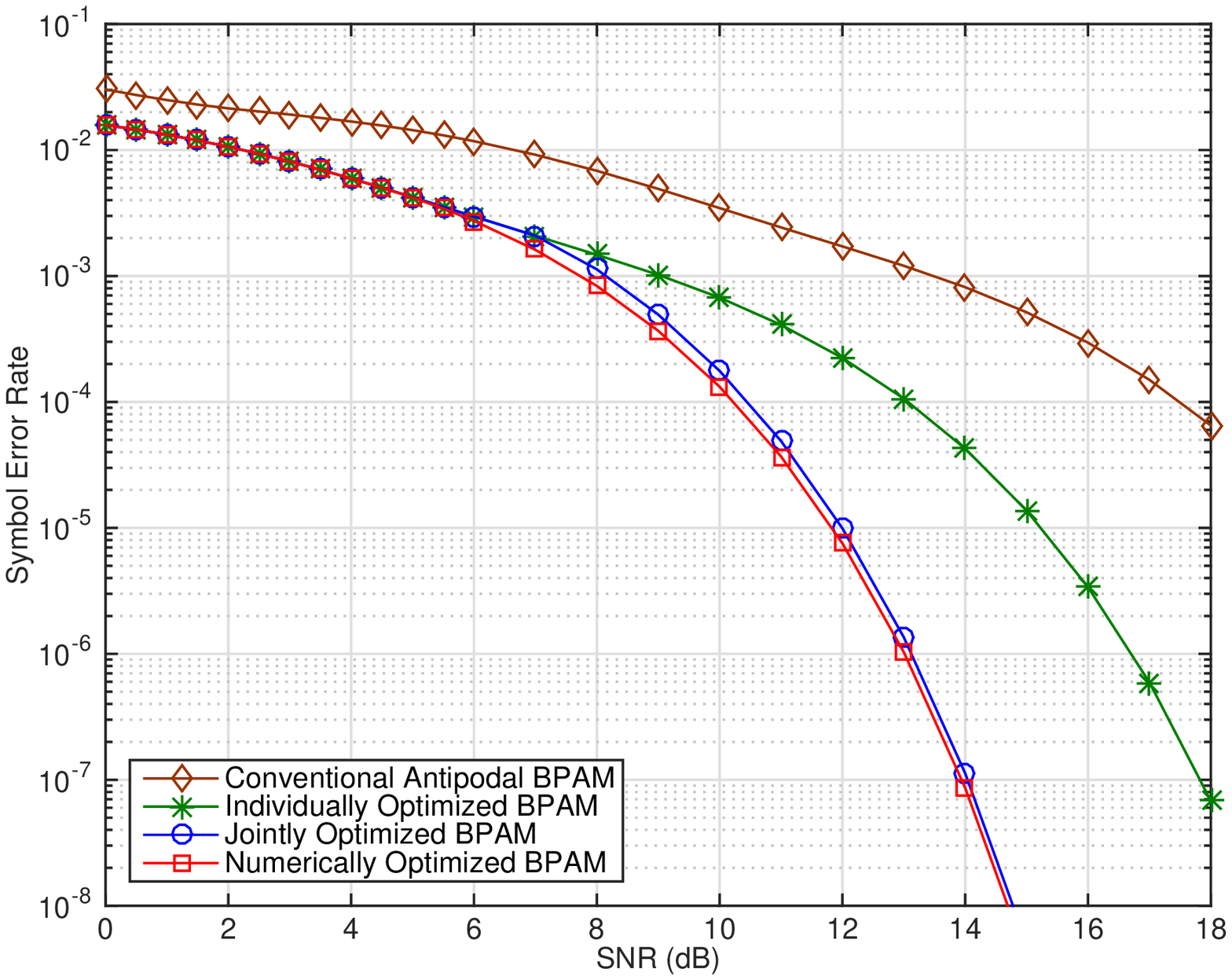} 
\caption{The decoding performance of various BPAM for $\underline{p}_{UV, \text{Case}1}$, where $\gamma_{\phi}=1$ and $E_1=2E_2$.}
\label{1DUP}
\end{figure}

\end{document}